\newtheorem{teo}{Theorem}[section]
\newtheorem{cor}[teo]{Corollary}
\newtheorem{lem}[teo]{Lemma}
\newtheorem{ex}[teo]{Example}
\providecommand{\F}{\ensuremath{\mathbb{F}_{q}} }
\providecommand{\gf}[1]{\ensuremath{\mathbb{F}_{#1}}}
\title{On the dimension of ideals in group algebras, and group codes}
\author{
E. J. García-Claro and H. Tapia-Recillas\\ 
\small{Departmento de Matemáticas}\\
\small{Universidad Autónoma Metropolitana-Iztapalapa, Ciudad de México, México}\\
\small{eliasjaviergarcia@gmail.com}, \small{htr@xanum.uam.mx} }
\begin{document}

\maketitle

\begin{abstract}
Several relations and bounds for the dimension of principal ideals in group algebras are determined by analyzing minimal polynomials of
regular representations. These results are used in the two last sections. First, in the context of semisimple group algebras, to compute, for any abelian code, an element with Hamming weight equal to its dimension. Finally, to get bounds on the minimum distance of certain MDS group codes. A relation between a class of group codes and MDS codes is presented. Examples illustrating the main results are provided.
\end{abstract}

\textbf{Keywords.}
 group algebras, principal ideals, primitive idempotents, MDS group codes, abelian codes.

\section{Introduction}

The group algebra $FG$ of a finite group $G$ over the field $F$ is the vector space of formal linear combinations of elements in $G$ with coefficients in $F$, i.e., $FG:=\left\lbrace \sum_{g\in G} a_{g} g \, : \, a_{g}\in F \right\rbrace$. This set is a ring with the usual sum of vectors and the multiplication given by extending  the operation of $G$. If $F$ is finite, a group code (or $G$-code) $C$ is an ideal of $FG$. In addition, if $G$ is abelian, then $C$ is called an abelian group code. The Hamming weight $wt_{G}(\bf{x})$ of an element $\textbf{x} \in FG$ is the number of non-zero coefficients in its coordinate vector with respect to the basis $G$. The minimum weight of a group code is the minimum Hamming weight of its non-zero elements.

Finding ways to compute  the dimension of ideals of finite-dimensional $F$-algebras is itself of interest. In the context of coding theory, this is crucial because the dimension is a parameter  needed, apart from the minimum distance, to determine how good or bad is a code for error correction. As it is well-known, the relations between the length $(n)$, dimension $(k)$, and minimum distance $(d)$ of a linear code are important for various reasons, including the determination of optimal codes or with prescribed minimum distances as in the BCH codes. A relation of this sort is the Singleton bound, $k \leq n-d+1$, which leads to MDS codes when equality holds. Thus, determining the dimension and minimum distance of a linear code, or at least giving upper/lower bounds on these parameters, is an interesting question. Several cases appear in the literature in which these parameters are being explored for group codes. For instance, in \cite{g-equiv} R. A. Ferraz, M. Guerreiro, and C. Polcino determine relations to compute the dimension and minimum weight of abelian codes which are minimal (that is, containing only themselves and the zero ideal) in $\mathbb{F}_{2}(C_{p^n}\times C_{p}) $ where $p$ is an odd prime number and $n\geq 3$. Later, in \cite{raul2},  F. S. Dutra, R. A. Ferraz, and  C. Polcino determine these parameters for two-sided ideals in the semisimple group algebra of a dihedral group. In \cite{elia-gorla}, M. Elia and E. Gorla addressed the problem of determining the dimension of group codes that are principal ideals by studying the characteristic polynomial of the right/left regular representation of a generator. Recently, in \cite{bch-di} lower bounds on these parameters on the class of principal BCH-dihedral codes introduced by these authors are provided and allow to give dihedral codes with prescribed minimum distance.

Principal ideals in finite group algebras are closely related to a class of group codes called Checkable (see \cite{jitman1, jitman2, on-ckcodes}) which are those affording only one check equation, or equivalently, the right/left annihilators of an element. In \cite{on-ckcodes} M. Borello, J. de la Cruz, W. Willems prove that checkable codes are the duals of principal ideals and that the group  algebras for which every group code is checkable are determined, based on the structure of their underlying group. As a consequence of this characterization, a well-known result of Passman is obtained, which states that if $char(F)=p$, all right ideals of $FG$ are principal if and only if $G$ is $p$-nilpotent with a cyclic Sylow $p$-subgroup.


 

In this work, we focus on the determination of relations (such as bounds, identities, and congruences) for the dimension of principal ideals in group algebras by studying the minimal polynomial of the right regular representation determined by a generator of the ideal, and use these relations to study the dimension of semisimple abelian codes. The manuscript is organized as follows. In Section \ref{preliminaries},  preliminary results that will be needed throughout the manuscript are presented. In Section  \ref{app-pri-dec}, by using the Primary Decomposition Theorem, relations for the dimension of some principal ideals in group algebras are presented. These results are first used in Section \ref{abelian} to  study the dimension of abelian codes in semisimple group algebras. In this case, a formula and a bound for the dimension of certain abelian codes are given, and a linear transformation is determined with the property that its evaluation in the generator idempotent of an ideal has Hamming weight equal to the dimension of the ideal. Finally, in Section \ref{mds-gpscod}, these results are used to compute bounds on the minimum distance of a class of MDS group codes. Examples are included illustrating the main results.

\section{Preliminaries}\label{preliminaries}

Throughout this paper, $G$ will denote a finite group, $F$ a field, $R=FG$ the group algebra of $G$ over $F$, and for $b\in R$, $r_b$ ($l_b$) will denote the right (left) regular representation of $b$, i.e., the $F$-endomorphism of $R$ given by $r_{b}(w)=wb$ ($l_{b}(w)=bw$). Also, $m_{b}(x)$ and $p_{b}(x)$ will denote the minimal and characteristic polynomial of $r_{b}$. Furthermore, every module (ideal) is considered a left module (ideal) unless stated otherwise.  

\medskip

Observe that $G$ is an isomorphic group to $\rho(G):=\{r_{g} :\, g\in G\}$ with the composition. Thus    $\varphi:FG\rightarrow F\rho(G)$ given by $\varphi(\sum_{g\in G} a_{g} g)=\sum_{g\in G} a_{g} r_g$ is an isomorphism of $F$-algebras.

\begin{lem}\label{compu-idemp}
 Let $b\in R$, and $\kappa(x)$ be a polynomial that annihilates $r_{b}$. Let $\kappa(x)=f_0(x)f_1(x)$ be a decomposition into coprime factors. If $u_{0}(x), u_{1}(x)\in F[x]$ are such that $u_{0}(x)f_{0}(x) + u_{1}(x)f_{1}(x)=1$, then $u_{i}(b)f_{i}(b)$ is an  idempotent generator of $Rf_{i}(b)$ for $i=0,1$. 
\end{lem}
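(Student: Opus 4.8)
The plan is to transfer the polynomial identities from the operator $r_b$ to the element $b$ itself, and then to exploit the commutativity of the subalgebra $F[b]\subseteq R$. First I would note that evaluating a polynomial at $r_b$ corresponds, under the isomorphism $\varphi$, to evaluating it at $b$: since $\varphi$ is an $F$-algebra isomorphism with $\varphi(b)=r_b$, one has $\kappa(r_b)=\varphi(\kappa(b))$. The hypothesis that $\kappa$ annihilates $r_b$ thus reads $\varphi(\kappa(b))=0$, and injectivity of $\varphi$ forces $\kappa(b)=0$ in $R$. In particular $f_0(b)f_1(b)=f_1(b)f_0(b)=0$.

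Next I would evaluate the Bézout identity at $b$ to obtain $u_0(b)f_0(b)+u_1(b)f_1(b)=1$. Writing $e_i:=u_i(b)f_i(b)$, this says $e_0+e_1=1$. Since all factors involved are polynomials in the single element $b$, they lie in the commutative subalgebra $F[b]$, so products may be freely reordered. In particular $e_0e_1=u_0(b)u_1(b)f_0(b)f_1(b)=u_0(b)u_1(b)\kappa(b)=0$, and by symmetry $e_1e_0=0$. Multiplying $e_0+e_1=1$ on the right by $e_i$ and using this orthogonality then yields $e_i^2=e_i$, so each $e_i$ is an idempotent.

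It remains to identify $Re_i$ with $Rf_i(b)$. The inclusion $Re_i\subseteq Rf_i(b)$ is immediate from $e_i=u_i(b)f_i(b)$. For the reverse inclusion I would show $f_i(b)\in Re_i$: using $f_0(b)e_1=u_1(b)f_0(b)f_1(b)=u_1(b)\kappa(b)=0$ (and symmetrically for the other index), the decomposition $1=e_0+e_1$ gives $f_0(b)=f_0(b)(e_0+e_1)=f_0(b)e_0\in Re_0$, whence $Rf_0(b)\subseteq Re_0$; the case $i=1$ is identical. Combining the two inclusions gives $Re_i=Rf_i(b)$, so $e_i$ is an idempotent generator of $Rf_i(b)$, as claimed.

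The only real subtlety is the first step, namely passing from the operator statement $\kappa(r_b)=0$ to the element statement $\kappa(b)=0$; once this is secured via $\varphi$, the rest reduces to elementary manipulations inside the commutative algebra $F[b]$, which is precisely the Chinese-Remainder / primary-decomposition mechanism underlying the result.
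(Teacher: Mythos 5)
Your proof is correct and takes essentially the same route as the paper's: the identical Bézout identity and orthogonality computations ($e_0e_1=u_0(b)u_1(b)\kappa(b)=0$ and $f_0(b)e_1=u_1(b)\kappa(b)=0$) drive both arguments. The only difference is cosmetic --- you transfer $\kappa(r_b)=0$ to $\kappa(b)=0$ via $\varphi$ at the outset and compute inside the commutative subalgebra $F[b]\subseteq R$, whereas the paper performs the same computation with $E_i=u_i(r_b)f_i(r_b)$ in $F\rho(G)$ and pulls the conclusions back through $\varphi^{-1}$ at the end (your version even spells out the two inclusions giving $Re_i=Rf_i(b)$ slightly more explicitly).
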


\begin{proof}
Since $f_{0}(x)$ and $f_{1}(x)$ are coprime, $u_{0}(x), u_{1}(x)\in F[x]$ exist  such that $1=u_{0}(x)f_{0}(x) + u_{1}(x)f_{1}(x)$. Let $E_{i}:=u_{i}(r_{b})f_{i}(r_{b})$ for $i=0,1$. Then $id= E_{0} + E_{1}$, $E_{0}=E_{0}^{2} +E_{0}E_{1}$ and $f_{0}(r_{b})=f_{0}(r_{b})E_{0}+f_{0}(r_{b})E_{1}$, but $E_{0}E_{1}=(u_{0}(r_{b})u_{1}(r_{b}))\kappa(r_b)=0$ and $f_{0}(r_{b})E_{1}=u_{1}(r_{b})\kappa(r_b)=0$, so $E_{0}$ is an idempotent of $F\rho(G)$ and $f_{0}(r_{b})\in F\rho(G)E_{0}$. By a similar argument, $E_{1}$ is an idempotent of $F\rho(G)$ and $f_{1}(r_{b})\in F\rho(G)E_{1}$. The result follows from the fact that $\varphi^{-1}(E_{i})=u_{i}(b)f_{i}(b)\in Rf_{i}(b)$ for all $i$.
\end{proof}

In \cite[Corollary 5]{elia-gorla} a method to compute idempotent generators of a projective
ideal by solving a system of multivariate quadratic and linear equations
over the field is proposed. Lemma \ref{compu-idemp} is an alternative to solve that problem using the Euclidean Algorithm when a special type of generator element is known.

Recall that one of the equivalences of being a projective module is the following \cite[pg 29]{rep3}. $P$ is a projective $A$-module if there exists an $A$-module $P'$ such that  $A^{n}\cong P \oplus P'$ for some $n\in \mathbb{Z}^{+}$. 

\begin{lem}

Let $J$ be a  non-trivial principal ideal of $R$. The following statements are equivalent:

\begin{enumerate}\label{projective}
\item $J$ has a generator $b$ such that $r_b$ is annihilated by a polynomial $\kappa(x)=xh(x)$ where  $x\nmid h(x)$.

\item $J$ is a projective $R$-submodule of $R$. 

\item $J$ has an  idempotent generator.
\end{enumerate}

\end{lem}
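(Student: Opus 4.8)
The plan is to prove the three equivalences by establishing a cycle of implications: $(1)\Rightarrow(3)\Rightarrow(2)\Rightarrow(1)$. The first implication is the one where Lemma \ref{compu-idemp} does most of the work, and I expect the last implication $(2)\Rightarrow(1)$ to be the main obstacle, since it requires extracting concrete polynomial information from the abstract projectivity hypothesis.

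For $(1)\Rightarrow(3)$, suppose $J=Rb$ with $r_b$ annihilated by $\kappa(x)=xh(x)$ where $x\nmid h(x)$. Setting $f_0(x)=x$ and $f_1(x)=h(x)$, these are coprime factors of $\kappa(x)$, so by Lemma \ref{compu-idemp} I obtain an idempotent $e:=u_1(b)h(b)$ generating $Rh(b)$, and the complementary idempotent $u_0(b)b$ generating $Rb=J$. Thus $J=Rb$ has the idempotent generator $u_0(b)\,b$; I should verify that this element is genuinely idempotent and that it generates the same ideal as $b$, both of which follow directly from the conclusion of Lemma \ref{compu-idemp} with $i=0$, since $f_0(b)=b$ gives $Rf_0(b)=Rb=J$.

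For $(3)\Rightarrow(2)$, suppose $J=Re$ for an idempotent $e$. Then $R=Re\oplus R(1-e)$ as a direct sum of left ideals, since every $w\in R$ decomposes as $w=we+w(1-e)$ and the intersection $Re\cap R(1-e)$ is zero. Hence $J=Re$ is a direct summand of the free module $R=R^1$, and taking $P'=R(1-e)$ and $n=1$ in the projectivity criterion quoted before the lemma shows $J$ is projective.

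The implication $(2)\Rightarrow(1)$ is the hardest and is where I would spend the most care. Assuming $J$ is projective, it is a direct summand of $R$, so there is an idempotent $e$ with $J=Re$; the standard way to recover such an $e$ is to write $R=J\oplus J'$ and let $e$ be the projection of $1$ onto $J$ along $J'$, which is idempotent and satisfies $Re=J$. I would then take $b=e$ as the generator and exhibit the required annihilating polynomial: since $e$ is idempotent, $r_e^2=r_{e^2}=r_e$, so $r_e$ is annihilated by $x^2-x=x(x-1)$, and with $h(x)=x-1$ one has $x\nmid h(x)$. Thus $\kappa(x)=x(x-1)$ has the required form. The subtlety to check is that the projection idempotent $e$ really generates $J$ rather than merely mapping into it, and that $r_e$ being annihilated by $x(x-1)$ is consistent with $e$ being a nonzero, non-identity idempotent (which holds because $J$ is non-trivial); this ties the three conditions together and completes the cycle.
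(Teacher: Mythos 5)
Your cycle $(1)\Rightarrow(3)\Rightarrow(2)$ is sound: the implication $(1)\Rightarrow(3)$ via Lemma \ref{compu-idemp} with $f_0(x)=x$, $f_1(x)=h(x)$ is exactly the paper's argument, and your Peirce decomposition $R=Re\oplus R(1-e)$ for $(3)\Rightarrow(2)$ is a correct elementary substitute for the paper's citation of \cite[Lemma 2.1, part (b)]{dlc-willems}. The genuine gap is in $(2)\Rightarrow(1)$, at the sentence ``Assuming $J$ is projective, it is a direct summand of $R$.'' Projectivity, via the criterion quoted before the lemma, only makes $J$ a direct summand of some free module $R^n$; it does \emph{not} produce a complement to $J$ inside $R$ itself. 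Equivalently, you need the exact sequence $0\rightarrow J\rightarrow R\rightarrow R/J\rightarrow 0$ to split, and projectivity of $J$ does not force this (it would follow from $R/J$ being projective, or from $J$ being \emph{injective}).

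This step fails in general finite-dimensional algebras, so it cannot be ``standard'': let $T$ be the algebra of upper triangular $2\times 2$ matrices over $F$ and $J=Te_{12}=Fe_{12}$, where $e_{12}$ is the matrix unit. Then $J$ is a non-trivial principal left ideal, and $te_{12}=a\,e_{12}$ for $t$ with $(1,1)$-entry $a$, so $J\cong Te_{11}$, which is projective since $e_{11}$ is idempotent; yet every element of $J$ is nilpotent, so $J$ has no idempotent generator and is not a direct summand of $T$. What rescues the argument for $R=FG$ is that a group algebra of a finite group is a Frobenius, hence self-injective, algebra: projective $R$-modules are injective, and an injective submodule is a direct summand of any module containing it. This is precisely the content of \cite[Lemma 2.1, part (a)]{dlc-willems}, which the paper cites for $2)\Rightarrow 3)$; to close your cycle you must either invoke that result or supply the self-injectivity argument. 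Once the idempotent $e$ with $J=Re$ is legitimately in hand, the rest of your $(2)\Rightarrow(1)$ step (recovering $e$ as the component of $1$ in $J$, and observing that $\kappa(x)=x(x-1)$ annihilates $r_e$ with $x\nmid x-1$) is correct.
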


\begin{proof}
By \cite[Lemma 2.1, part (a)]{dlc-willems} $2)\Rightarrow 3)$. It is clear that $3)\Rightarrow 1)$. Suppose $J$ has a generator $b$ that is annihilated by a polynomial $\kappa(x)=xh(x)$ where  $x\nmid h(x)$. Then, by Lemma \ref{compu-idemp}, $J$ is generated by an idempotent, so $J$ is projective (by \cite[Lemma 2.1, part (b)]{dlc-willems}).
\end{proof}

Now we will see that the dimensions of the left and the right ideal generated by an element in $R$ are the same. Recall  that the mapping $^*:R \rightarrow R$ given by $u^*=\sum_{g\in G} a_{g}g^{-1}$, for $u=\sum_{g\in G} a_{g}g$, is an antiautomorphism of $F$-algebras (see \cite[Proposition 3.2.11]{grouprings}, \cite[pg 5]{grouprings2}).  

\begin{lem}\label{dim-rank}

 Let $b\in R$, $[r_b]_{G}$ and $[l_b]_{G}$ be the matrices of $r_{b}$ and $l_b$ in the basis $G$, respectively.  Then, $\mathrm{dim}(Rb)=\mathrm{rank}([r_b]_{G})=\mathrm{rank}([l_b]_{G})=\mathrm{dim}(bR)$. 
\end{lem}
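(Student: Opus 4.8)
The plan is to prove the four quantities equal by first disposing of the two outer identities, which are essentially definitional, and then concentrating on the middle equality $\mathrm{rank}([r_b]_G)=\mathrm{rank}([l_b]_G)$, where all the real content lies. For the outer identities, observe that $r_b(w)=wb$, so as $w$ ranges over $R$ the image of $r_b$ is exactly the left ideal $Rb$; since the rank of a linear endomorphism equals the dimension of its image (independently of the chosen basis), $\mathrm{rank}([r_b]_G)=\dim(\mathrm{Im}\,r_b)=\dim(Rb)$. Symmetrically, $\mathrm{Im}\,l_b=bR$ gives $\mathrm{rank}([l_b]_G)=\dim(bR)$. Thus the lemma reduces to showing $\mathrm{rank}([r_b]_G)=\mathrm{rank}([l_b]_G)$.

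For this remaining equality I would exploit the antiautomorphism $^{*}$ introduced just above, in two steps. First, the identity $(wb)^{*}=b^{*}w^{*}$, valid for all $w\in R$ because $^{*}$ reverses products, says precisely that $^{*}\circ r_b=l_{b^{*}}\circ{}^{*}$ as $F$-endomorphisms of $R$. Since $^{*}$ is an $F$-linear bijection, this conjugacy forces $r_b$ and $l_{b^{*}}$ to have the same rank, i.e. $\mathrm{rank}([r_b]_G)=\mathrm{rank}([l_{b^{*}}]_G)$. Second, I would show by a direct computation of matrix entries that $[l_{b^{*}}]_G=[l_b]_G^{T}$: writing $b=\sum_h b_h h$ and $G=\{g_1,\dots,g_n\}$, the $(j,i)$ entry of $[l_b]_G$ is the coefficient of $g_j$ in $bg_i$, namely $b_{g_j g_i^{-1}}$, whereas the $(j,i)$ entry of $[l_{b^{*}}]_G$ is $(b^{*})_{g_j g_i^{-1}}=b_{g_i g_j^{-1}}$, which is exactly the $(i,j)$ entry of $[l_b]_G$. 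Because transposition preserves rank, $\mathrm{rank}([l_{b^{*}}]_G)=\mathrm{rank}([l_b]_G)$, and chaining the equalities yields the claim.

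The only genuinely delicate point is this entrywise transpose identity: it requires careful bookkeeping of the group-indexed coefficients $b_{g_j g_i^{-1}}$ and of how $^{*}$ acts on indices (the reindexing $(b^{*})_s=b_{s^{-1}}$), and one must be consistent about the column-versus-row convention for the representing matrices. None of this is hard, but it is the step where a mishandled inverse would break the argument. As a conceptually cleaner alternative that sidesteps the coordinate computation, I could instead use that $R=FG$ is a symmetric algebra for the nondegenerate associative form $\langle u,v\rangle=\tau(uv)$, where $\tau$ extracts the coefficient of the identity: there $(Rb)^{\perp}=\{v:bv=0\}=\ker l_b$, so rank–nullity gives $\dim(Rb)=\dim R-\dim\ker l_b=\mathrm{rank}\,l_b=\dim(bR)$ directly. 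Either route completes the proof, and since the antiautomorphism is the tool already set up in the text, I would present the first one.
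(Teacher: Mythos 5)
Your proof is correct, and its skeleton matches the paper's: identify $\dim(Rb)=\mathrm{rank}([r_b]_G)$ and $\dim(bR)=\mathrm{rank}([l_b]_G)$ (the paper does this via the columns $[gb]_G$ of $[r_b]_G$, which is the same observation as your image argument), then transfer between left and right using the antiautomorphism $^*$. The genuine difference is in how the middle link is closed. The paper establishes $(Rb)^*=b^*R$, which since $^*$ is an $F$-linear bijection gives $\dim(Rb)=\dim(b^*R)$ — this is the operator-level content of your conjugacy ${}^*\circ r_b = l_{b^*}\circ{}^*$ — and then simply writes $\dim(b^*R)=\dim(bR)$ in its chain of equalities without further justification; note that applying $^*$ once more only reproduces $\dim(b^*R)=\dim(Rb)$, so that link does not follow formally from the antiautomorphism alone. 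Your entrywise computation $[l_{b^*}]_G=[l_b]_G^{T}$, via $(b^*)_s=b_{s^{-1}}$ and the coefficient bookkeeping $b_{g_jg_i^{-1}}$, is exactly the missing content: transposition preserves rank, so $\mathrm{rank}([l_{b^*}]_G)=\mathrm{rank}([l_b]_G)$, and the chain closes. Your alternative argument via the symmetric form $\langle u,v\rangle=\tau(uv)$, with $(Rb)^{\perp}=\ker l_b$ and rank–nullity, is also correct and is the standard Frobenius/symmetric-algebra proof; it buys basis-free generality (it works verbatim in any symmetric algebra and bypasses $^*$ entirely) at the cost of invoking nondegeneracy and the dimension formula for orthogonal complements, whereas your first route stays within the elementary toolkit the paper has already set up.
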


\begin{proof}
Since $Rb=\mathrm{span}_{F}( \{gb : \, g\in G\})$, $ \mathrm{dim}(Rb)=\mathrm{dim}(\mathrm{span}_{F}( \{[gb]_G : \, g\in G\})$ where $[gb]_G$ is the  coordinate vector of $gb$ with respect to the basis $G$. But these vectors are precisely the columns of $[r_b]_{G}$, so that $ \mathrm{dim}(Rb)=\mathrm{rank}([r_b]_{G})$. Analogous reasoning with $bR$ and $l_b$, shows that $ \mathrm{dim}(bR)=\mathrm{rank}([l_b]_{G})$. On the other hand,  $(Rb)^*=b^{*}R$ because $(\;)^{*}$ is an antiautomorphism of $F$-algebras. This implies that $\mathrm{rank}([r_b]_{G})=\mathrm{dim}(Rb)=\mathrm{dim}(b^{*}R)=\mathrm{dim}(bR)=\mathrm{rank}([l_{b}]_{G})$. 
\end{proof}

If $F=\gf{q}$, the ideals $Rb$ and $b^{*}R$ define equivalent group codes because $(\;)^{*}$ restricted to $G$ is a permutation.

Lemma \ref{dim-rank} is a different version of \cite[Proposition 1]{elia-gorla}. However, Lemma \ref{dim-rank} mentions the equality of the dimensions between the left and right ideals generated by the same element, while  the mentioned  proposition  does not.


\section{Dimension of ideals in group algebras}\label{app-pri-dec}

For convention, when an integer is matched with a modular class, we mean that the reduction of this number to the respective modulo is equal to the modular class. This notation is the same used in \cite[ Lemma 1.2, Ch.2]{grouprings2}.

\begin{lem} \label{cong}
Let $A\in \mathcal{M}_{n\times n}(F)$ be a matrix with minimal polynomial $m_{A}(x)=x(x-a)^s$ for some integer $s\geq 1$ and $a\in F-\{0\}$. Then the following statements hold:
\begin{enumerate}
\item $\mathrm{trace}(A)a^{-1}$ lies in the prime subfield of $F$.
\item  $\mathrm{rank}(A)=\mathrm{trace}(A)a^{-1}$.
\end{enumerate}
\end{lem}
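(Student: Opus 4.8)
The plan is to exploit the primary (Fitting) decomposition determined by the minimal polynomial and then read off both the rank and the trace from the resulting block structure. First I would observe that since $a \neq 0$, the two factors $x$ and $(x-a)^s$ of $m_{A}(x)$ are coprime in $F[x]$, so the Primary Decomposition Theorem yields an $A$-invariant direct sum $F^{n} = V_{0} \oplus V_{a}$, where $V_{0} = \ker(A)$ (the factor $x$ occurs with exponent $1$, so the generalized eigenspace for $0$ is the ordinary kernel) and $V_{a} = \ker((A - aI)^{s})$. Writing $d_{0} = \mathrm{dim}(V_{0})$ and $d_{a} = \mathrm{dim}(V_{a})$, we have $d_{0} + d_{a} = n$. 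On $V_{0}$ the operator $A$ is zero, while on $V_{a}$ its minimal polynomial is $(x-a)^{s}$, which does not vanish at $0$; hence $A|_{V_{a}}$ is invertible.

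From this decomposition the rank is immediate: since $\mathrm{im}(A) = A(V_{0}) + A(V_{a}) = \{0\} + V_{a} = V_{a}$, I obtain $\mathrm{rank}(A) = d_{a}$ as a nonnegative integer. For the trace I would compute the characteristic polynomial block by block, $p_{A}(x) = x^{d_{0}}(x-a)^{d_{a}}$, using that $A|_{V_{a}}$ has $(x-a)$ as its only irreducible factor (the characteristic and minimal polynomials share irreducible factors), which forces $p_{A|_{V_{a}}}(x) = (x-a)^{d_{a}}$. Extracting the coefficient of $x^{n-1}$ by the binomial expansion gives $\mathrm{trace}(A) = d_{a}\,a$, so that $\mathrm{trace}(A)\,a^{-1} = d_{a}\cdot 1_{F}$, the image of the integer $d_{a}$ under the canonical map $\mathbb{Z}\to F$.

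With these two computations both statements follow. Since $d_{a}\cdot 1_{F}$ is a sum of copies of $1_{F}$, it lies in the image of $\mathbb{Z}$ in $F$, hence in the prime subfield, proving item (1). For item (2), both $\mathrm{rank}(A)$ and $\mathrm{trace}(A)\,a^{-1}$ are governed by the single quantity $d_{a}$: the former is $d_{a}$ as an honest integer, and the latter is exactly the reduction of $d_{a}$ in the prime subfield, so the asserted equality holds under the integer/modular-class convention fixed at the start of this section. The point requiring the most care is not a deep obstacle but precisely this interpretation of the equality in (2), which identifies an integer (the rank) with a prime-field element (the normalized trace) only after reduction modulo $\mathrm{char}(F)$; the whole argument rests on recognizing $d_{a}$ in both roles simultaneously.
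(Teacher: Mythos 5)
Your proof is correct and follows essentially the same route as the paper's: both rest on the primary decomposition $F^{n}=\ker(A)\oplus\ker((A-aI)^{s})$, identify $\mathrm{rank}(A)$ with $d_{a}=\dim\ker((A-aI)^{s})$, compute $\mathrm{trace}(A)=d_{a}\,a$, and then invoke the section's integer/modular-class convention to interpret the equality in (2) modulo $\mathrm{char}(F)$. The only divergence is minor: the paper reads the trace off the diagonal of the Jordan canonical form (available over $F$ since $m_{A}(x)$ splits there), whereas you extract it from the coefficient of $x^{n-1}$ in the blockwise characteristic polynomial $x^{d_{0}}(x-a)^{d_{a}}$, a slightly more economical count that avoids invoking the Jordan form altogether.
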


\begin{proof}

By the Primary Decomposition Theorem \cite[Theorem 12, Ch. 6]{hoffman-kunze}, $n=\mathrm{dim}(ker(A))+\mathrm{dim}(ker(A -aI)^s)$, then $\mathrm{rank}(A)= n-\mathrm{dim}(ker(A))= \mathrm{dim}(ker(A-1)^s)$
. If $N$ is the Jordan canonical form of $A$, then  $\mathrm{trace}(A)=\mathrm{trace}(N)$ is equal to the sum of $a$  as many times as it appears in the main diagonal of $N$. This implies that $\mathrm{trace}(A)a^{-1}$ lies in the prime subfield of $F$. On the other hand, the number of non-zero entries in the main diagonal of $N$ is equal to $\mathrm{dim}(ker(A-aI)^s)$. Thus, if $char(F)=0$,  $\mathrm{rank}(A)=\mathrm{trace}(A)a^{-1}$. If $char(F)=p>0$ and  $u=\mathrm{rank}(A)<p$, then $\mathrm{rank}(A)=\mathrm{trace}(A)a^{-1}$. If $u=pc+r$ where $0\leq r <p$, then $\mathrm{trace}(A)a^{-1}=\overline{r}$ which is equal to the reduction of $u$ modulo $p$, finishing the proof.
\end{proof}

 Lemma \ref{cong}(part 1) is related to \cite[Theorem 7.2.1, 7.2.2]{grouprings}. In fact, if we restrict these last to finite-dimensional group algebras, then Lemma \ref{cong}(part 1) implies both of them.

For any $x\in R$, the coefficient of $x$ at $1$ will be denoted by  $\lambda_{1}(x)$, this is called the trace of $x$ (see, \cite[pg 221]{grouprings}, \cite[pg 31]{grouprings2}).

\begin{teo}\label{dim-primary}
Let $b\in R$ be such that $m_{b}(x)=x^{n}p_{1}(x)^{r_{1}}\cdots p_{t}(x)^{r_{t}} $ where the $p_i$ are monic irreducible distinct factors and $p_{i}\neq x$ for all $i$. Let $\zeta_{n}=\mathrm{dim}(ker(r_{b}^{n})/ ker(r_{b}))$. Then

\begin{enumerate}

\item $ \mathrm{dim}(Rb)= \zeta_{n} + \sum_{i=1}^{t}\mathrm{dim}(ker(p_{i}(r_{b})^{r_{i}}))$. Moreover, if $p_{b}(x)=x^{u}h(x)$ with $x\nmid h(x)$, then $\mathrm{dim}(Rb)= \zeta_{n} + |G|-u$.

\item If $char(F)=p>0$ and $|G|_p$ is the $p$-part of $|G|$, then 

\[\mathrm{dim}(Rb) \geq \begin{cases} (t+1)|G|_{p} &\mbox{if } |G|_{p} \mid \mathrm{dim}(ker(r_{b})) \mbox{                                                and } n>1 \\
t|G|_{p} & \mbox{otherwise}  \end{cases}\]

Moreover, if $n=1$, $|G|_{p} \mid \mathrm{dim}(Rb)$ and $\mathrm{dim}(Rb) \in [t|G|_{p}, |G|-|G|_{p}]$.  

\item  If $m_{b}(x)=x(x-a)^{s}$ for some $s\geq 1$ and $a\in F-\{0\}$, then $\mathrm{dim}(Rb)=|G|\lambda_1(b)a^{-1}$.

\end{enumerate}

\end{teo}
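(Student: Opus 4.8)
The plan is to reduce everything to a rank–nullity and primary decomposition analysis of the single operator $r_b$ on the $|G|$-dimensional space $R$, together with the observation that the primary components are themselves $FG$-submodules. First I would invoke Lemma \ref{dim-rank} to replace $\mathrm{dim}(Rb)$ by $\mathrm{rank}([r_b]_{G}) = |G| - \mathrm{dim}(ker(r_b))$. For part 1, apply the Primary Decomposition Theorem to $r_b$ with $m_b(x) = x^n p_1(x)^{r_1}\cdots p_t(x)^{r_t}$ to split $R = ker(r_b^n)\oplus \bigoplus_{i=1}^{t} ker(p_i(r_b)^{r_i})$, so that $|G| = \mathrm{dim}(ker(r_b^n)) + \sum_{i=1}^{t} \mathrm{dim}(ker(p_i(r_b)^{r_i}))$. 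Since $\zeta_n = \mathrm{dim}(ker(r_b^n)) - \mathrm{dim}(ker(r_b))$, subtracting $\mathrm{dim}(ker(r_b))$ from $|G|$ yields the first identity immediately. For the ``moreover'' clause I would note that the multiplicity $u$ of the factor $x$ in the characteristic polynomial $p_b(x)$ equals the dimension of the generalized $0$-eigenspace $ker(r_b^n)$, so $\zeta_n = u - \mathrm{dim}(ker(r_b))$ and $\mathrm{dim}(Rb) = |G| - \mathrm{dim}(ker(r_b)) = \zeta_n + |G| - u$.

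The key structural input for part 2 is that each summand of the primary decomposition is a \emph{projective} $FG$-module. Because left multiplications commute with $r_b$, every $r_b$-invariant subspace is a left ideal of $R$; in particular each component $ker(p_i(r_b)^{r_i})$ and $ker(r_b^n)$ is an $FG$-submodule, and the decomposition above expresses the free left module $R=FG$ as a direct sum of submodules. Hence each component is a direct summand of the free module, and therefore projective. I would then use the standard fact that in characteristic $p$ the $F$-dimension of any projective $FG$-module is divisible by $|G|_p$ (restrict to a Sylow $p$-subgroup $P$, over which $FP$ is local, so projectives are free and have dimension a multiple of $|P| = |G|_p$). Consequently each $\mathrm{dim}(ker(p_i(r_b)^{r_i}))$ is a positive multiple of $|G|_p$, hence at least $|G|_p$, and likewise $|G|_p \mid \mathrm{dim}(ker(r_b^n))$.

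With this divisibility in hand the bounds follow by counting. Part 1 gives $\mathrm{dim}(Rb) = \zeta_n + \sum_{i=1}^{t} \mathrm{dim}(ker(p_i(r_b)^{r_i})) \geq t|G|_p$, the generic bound. For the first case, when $n>1$ the $x$-part of $m_b$ has exponent at least $2$, so $ker(r_b) \subsetneq ker(r_b^n)$ and $\zeta_n > 0$; if in addition $|G|_p \mid \mathrm{dim}(ker(r_b))$, then since $|G|_p \mid \mathrm{dim}(ker(r_b^n))$ we obtain $|G|_p \mid \zeta_n$, forcing $\zeta_n \geq |G|_p$ and hence $\mathrm{dim}(Rb) \geq (t+1)|G|_p$. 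When $n=1$ we have $ker(r_b) = ker(r_b^n)$, so $\zeta_n = 0$ and $\mathrm{dim}(Rb) = \sum_{i=1}^{t} \mathrm{dim}(ker(p_i(r_b)^{r_i}))$ is divisible by $|G|_p$; the lower endpoint $t|G|_p$ is immediate, while $\mathrm{dim}(Rb) = |G| - \mathrm{dim}(ker(r_b)) \leq |G| - |G|_p$ because $ker(r_b)$ is a nonzero projective module, giving the stated interval.

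Finally, part 3 is a direct application of Lemma \ref{cong} to $A = [r_b]_{G}$, whose minimal polynomial is exactly $x(x-a)^s$: that lemma gives $\mathrm{rank}([r_b]_{G}) = \mathrm{trace}([r_b]_{G})a^{-1}$, and Lemma \ref{dim-rank} identifies the left side with $\mathrm{dim}(Rb)$. It remains to compute the trace, where I would use that for $g \neq 1$ the matrix $[r_g]_{G}$ is a fixed-point-free permutation matrix (its $h$-diagonal entry is nonzero iff $hg = h$), so $\mathrm{trace}(r_g) = 0$, while $\mathrm{trace}(r_1) = |G|$; writing $b = \sum_{g} a_g g$ gives $\mathrm{trace}(r_b) = a_1 |G| = |G|\lambda_1(b)$, and hence $\mathrm{dim}(Rb) = |G|\lambda_1(b)a^{-1}$. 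The only genuinely nontrivial ingredient in the whole argument is the projectivity-plus-divisibility fact underlying part 2; the remainder is bookkeeping with the primary decomposition and a short trace computation.
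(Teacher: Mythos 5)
Your proposal follows essentially the same route as the paper's proof: the primary decomposition of $r_b$ plus rank--nullity for part 1, projectivity of the primary components as left ideals together with $|G|_p$-divisibility of dimensions of projective modules for part 2, and Lemma \ref{dim-rank} plus Lemma \ref{cong} plus the trace identity $\mathrm{trace}([r_b]_G)=|G|\lambda_1(b)$ for part 3. Where you differ, you simply inline the proofs of facts the paper cites: your fixed-point-free permutation computation is exactly \cite[Lemma 7.1.1]{grouprings}, your Sylow-restriction/local-ring argument is the standard proof of \cite[Corollary 7.16, Ch. VII]{fin-gp}, and your appeal to ``algebraic multiplicity of $0$ equals $\dim(\ker(r_b^n))$'' replaces the paper's minimal-polynomial chain argument showing $\ker(r_b^n)=\ker(r_b^{|G|})$; for the $n=1$ upper endpoint you bound $\dim(\ker(r_b))\geq |G|_p$ directly, where the paper notes $|G|_p\mid \dim(Rb)$ and $Rb\neq R$ --- these are equivalent.

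One sentence does need repair: the claim that ``every $r_b$-invariant subspace is a left ideal of $R$'' is false in general --- the span of a single eigenvector of $r_b$ is $r_b$-invariant but typically not closed under left multiplication by $R$. What the commuting observation actually gives, and all your argument uses, is that $f(r_b)$ is an endomorphism of $R$ as a left $R$-module for every polynomial $f$, so each kernel $\ker(f(r_b))$ is a left ideal; applied to $f=x^n$ and $f=p_i^{r_i}$ this exhibits each summand of the primary decomposition as a direct summand of the free module $R$, hence projective. This is precisely how the paper argues (via $r_b$ being a morphism of $R$-modules), and with that local correction your proof is complete.
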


\begin{proof}
Let  $U_{0}=ker(r_{b})$ and $U_{1}=ker(r_{b}^{n})$.
\begin{enumerate}
\item  Let $W\subset U_{1}$ be such that $U_{1}= U_{0} \oplus W$. Then,  by \cite[Theorem 12, Ch. 6]{hoffman-kunze} (part $i$), $R=(U_{0} \oplus W)\oplus ker(p_{1}(r_b)^{r_{1}})\oplus \cdots \oplus ker(p_{t}(r_b)^{r_{t}})$. Thus, by the  rank-nullity theorem,   
\begin{eqnarray*}
 \mathrm{dim}(Rb)&=&\mathrm{dim}(Im(r_{b}))\\
                  &=& \mathrm{dim}(W) + \sum_{i=1}^{t}\mathrm{dim}(ker(p_{i}(r_{b})^{r_{i}}))\\
                &= & \zeta_{n} + \sum_{i=1}^{t}\mathrm{dim}(ker(p_{i}(r_{b})^{r_{i}})). \\
\end{eqnarray*}

Let $p_{b}(x)=x^{u}h(x)$ with $x\nmid h(x)$. Let $U=ker(r_{b}^{|G|})$, $m_{1}(x)$ and $m_{2}(x)$ be the minimal polynomials of $\left.r_{b}\right|_{U_{1}}$ and $\left.r_{b}\right|_{U}$, respectively. By \cite[Theorem 12, part $iii$, Ch. 6]{hoffman-kunze}, $m_{1}(x)=x^{n}$ . Since $U_{1}\subseteq U \subseteq R$ is a chain of $r_{b}$-invariant spaces, then $x^{n}\mid m_{2}(x)\mid m_{b}(x)$. So as $m_{2}(x)\mid x^{|G|}$, $m_{2}(x)\mid x^{n}$, implying that $m_{2}(x)=x^{n}$, and thus $U_{1}=U$. Hence  $\mathrm{dim}(U_{1})=\mathrm{dim}(U)$ which is equal to  the algebraic multiplicity $u$ (see \cite[pp 171-172]{sheldom}). Therefore $ \sum_{i=1}^{t}\mathrm{dim}(ker(p_{i}(r_{b})^{r_{i}}))=|G|-u$.

\item Let $char(F)=p>0$ and $|G|_p$ be the $p$-part of $|G|$. Since $r_{b}$ is a morphism of $R$-modules, by \cite[Theorem 12, part $i$, Ch. 6]{hoffman-kunze}, $R=U_{1}\oplus ker(p_{1}(r_b)^{r_{1}})\oplus \cdots \oplus ker(p_{t}(r_b)^{r_{t}})$ is a decomposition of $R$ as sum of ideals. So $U_{1}$ and $ker(p_{i}(r_{b})^{r_{i}})$ are projective $R$-modules for all $i$, and by \cite[ Corollary 7.16, Ch. VII]{fin-gp}, $|G|_{p}$ divides $ \mathrm{dim}(U_{1})$ and $\mathrm{dim}(ker(p_{i}(r_{b})^{r_{i}}))$ for all $i$. Thus,  $t|G|_{p}\leq \mathrm{dim}(Rb)$. In addition, if $|G|_{p} \mid \mathrm{dim}(U_{0})$, $|G|_{p}$ is a common divisor of $\mathrm{dim}(U_{0})$ and $ \mathrm{dim}(U_{1})$. Thus, if  $n>1$, $ |G|_{p} \mid \zeta_{n} \neq 0$, implying that $(t+1)|G|_{p}\leq \mathrm{dim}(Rb)$.\\
If $n=1$, $Rb$ is projective (by Lemma \ref{projective}), and thus $|G|_{p}$ divides $\mathrm{dim}(Rb)$. So, as $Rb\neq R$,  $\mathrm{dim}(Rb)\in [t|G|_{p}, |G|-|G|_{p}]$.

\item Let $m_{b}(x)=x(x-a)^{s}$ for some $s\geq 1$ and $a\in F-\{0\}$. Then, by Lemmas \ref{dim-rank} and \ref{cong}, $\mathrm{dim}(Rb)=\mathrm{rank}([r_{b}]_G)=\mathrm{trace}([r_{b}]_G)a^{-1}$. Finally, by \cite[Lemma 7.1.1]{grouprings}, $\mathrm{trace}([r_{b}]_G)=|G|\lambda_1(b)$, hence $\mathrm{dim}(Rb)=|G|\lambda_1(b)a^{-1}$.
\end{enumerate}

\end{proof}

 Theorem \ref{dim-primary} (Part $3$)  is a more general version of \cite[Lemma 1.2, part $ ii $, Ch.2] {grouprings2}, which is only  valid  for  idempotents. The benefit of this result when compared with \cite[Lemma 1.2, part $ ii $, Ch.2]{grouprings2} is that it can be applied to a larger amount of elements of $ R $, apart from the idempotents. However,  by Lemma \ref{projective}, any element that satisfies the hypothesis of  Theorem \ref{dim-primary} (part $3$) generates a projective ideal, implying that this result can be applied only to ideals generated by idempotents.


By our convention, if $char(F)=0$ in  Theorem \ref{dim-primary} (part $3$),  we get an explicit formula for the dimension of $Rb$.  However, if $char(F)=p>0$, we only get the class of the dimension modulo $p$ (which is $|G|\lambda_{1}(b)a^{-1}$). Thus we have the following two Corollaries.

\begin{cor}\label{char=0}
Let $b\in R$ and $m_{b}(x)=x(x-a)^{s}$ for some $s\geq 1$ and $a\in F-\{0\}$. If $char(F)=0$, then $\mathrm{dim}(Rb)=|G|\lambda_{1}(b)a^{-1}$.
\end{cor}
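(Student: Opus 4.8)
The plan is to obtain the statement as an immediate specialization of Theorem~\ref{dim-primary} (part $3$), which already records the identity $\mathrm{dim}(Rb)=|G|\lambda_1(b)a^{-1}$ for an arbitrary field $F$. The entire content of the corollary is interpretive: by the paper's convention the equation in Theorem~\ref{dim-primary} (part $3$) is read modulo $p$ when $char(F)=p>0$, but in characteristic zero it should be a genuine equality of integers. So the task reduces to verifying that the chain of equalities establishing Theorem~\ref{dim-primary} (part $3$) never invokes a modular reduction when $char(F)=0$.

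First I would reassemble that chain: by Lemma~\ref{dim-rank} one has $\mathrm{dim}(Rb)=\mathrm{rank}([r_b]_G)$; by Lemma~\ref{cong} (part $2$) one has $\mathrm{rank}([r_b]_G)=\mathrm{trace}([r_b]_G)a^{-1}$; and by \cite[Lemma 7.1.1]{grouprings} one has $\mathrm{trace}([r_b]_G)=|G|\lambda_1(b)$. Composing these gives $\mathrm{dim}(Rb)=|G|\lambda_1(b)a^{-1}$. The only place where the characteristic matters is the middle step, so I would examine it in the characteristic-zero branch of the proof of Lemma~\ref{cong}: passing to the Jordan form $N$ of $[r_b]_G$, whose eigenvalues are only $0$ and $a$ with $a\neq 0$, the rank equals the number of nonzero diagonal entries of $N$, each equal to $a$; hence $\mathrm{trace}([r_b]_G)$ is $a$ times that count and $\mathrm{trace}([r_b]_G)a^{-1}$ recovers the count exactly. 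When $char(F)=0$ the prime subfield is $\mathbb{Q}$ and the canonical map $\mathbb{Z}\hookrightarrow F$ is injective, so this element of $F$ is literally the integer $\mathrm{rank}([r_b]_G)=\mathrm{dim}(Rb)$, with no reduction intervening.

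There is no substantive obstacle to overcome: the corollary is a direct consequence of Theorem~\ref{dim-primary} (part $3$) together with the characteristic-zero case of Lemma~\ref{cong} (part $2$). The only point deserving a remark is the interpretive one just discussed, namely that in characteristic zero the formula $\mathrm{dim}(Rb)=|G|\lambda_1(b)a^{-1}$ is a bona fide numerical identity rather than a congruence.
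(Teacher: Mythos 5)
Your proposal is correct and takes essentially the same route as the paper: Corollary~\ref{char=0} is presented there as an immediate specialization of Theorem~\ref{dim-primary} (part $3$), with the preceding remark making exactly your interpretive point that in characteristic zero the identity is a genuine integer equality rather than a congruence. Your verification that the characteristic-zero branch of Lemma~\ref{cong} involves no modular reduction is precisely the content the paper relies on, so nothing further is needed.
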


\begin{cor}\label{char=p}
Let $b\in R$, $J=Rb$, and $m_{b}(x)=x(x-a)^{s}$ for some $s\geq 1$ and $a\in F-\{0\}$. Let $char(F)=p>0$, and $r$ be the minimum positive integer in the class  $|G|\lambda_{1}(b)a^{-1}$. Then the following holds:

\begin{enumerate}
\item $r \leq \mathrm{dim}(J)$. Moreover, if $\mathrm{dim}(J)\leq p$, then $\mathrm{dim}(J)=r$. In particular, if $|G|-1 \leq p$ and $|G|\neq p$, then the dimension of any non-trivial ideal can be computed in this way. 

\item If $\lambda_{1}(b)=a=1$, $|G|\geq p$ and $c$ is the quotient of dividing $|G|$ by $p$, then $\mathrm{dim}(J)=|G|-pt$ for some $0\leq t\leq c$.

\item  $\mathrm{dim}(J)$ is a multiple of $p$ iff $\lambda_{1}(b)=0$ or $p\mid |G|$. 

\item If  $|G|-1> p$, $c$ is the quotient of dividing $|G|-1$ by $p$, and $\lambda_{1}(b)=0$, then $\mathrm{dim}(J)=pt$ for some $1\leq t \leq c$.

\item If $\mathrm{dim}(J)=1$, then $\lambda_{1}(b)= |G|^{-1}a$. 

\end{enumerate}
\end{cor}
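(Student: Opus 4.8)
The plan is to derive all five parts from a single ingredient: Theorem \ref{dim-primary} (part $3$), which under the present hypotheses gives $\mathrm{dim}(J)=|G|\lambda_{1}(b)a^{-1}$. Since $char(F)=p>0$, the right-hand side is an element of the prime subfield $\mathbb{F}_{p}\subseteq F$, so by the stated convention this identity means
\[
\mathrm{dim}(J)\equiv |G|\lambda_{1}(b)a^{-1} \pmod{p},
\]
and $r$ is by definition the least positive representative of this class. Two elementary bounds will be used repeatedly: first, $J=Rb\neq 0$ because $m_{b}(x)=x(x-a)^{s}$ forces $r_{b}\neq 0$; second, $J\neq R$ because $0$ is a root of $m_{b}(x)$, so $r_{b}$ is singular and $b$ is not a unit. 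Hence $1\leq \mathrm{dim}(J)\leq |G|-1$.

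For part $1$, I would note that $r$ and $\mathrm{dim}(J)$ are positive integers in the same class modulo $p$, so minimality of $r$ gives $r\leq \mathrm{dim}(J)$. If moreover $\mathrm{dim}(J)\leq p$, then both numbers lie in $\{1,\dots,p\}$, a complete set of residues modulo $p$, so equal residues force $\mathrm{dim}(J)=r$. The ``in particular'' clause follows because a non-trivial ideal satisfies $\mathrm{dim}(J)\leq |G|-1\leq p$; the hypothesis $|G|\neq p$ discards the degenerate case $|G|=p$, in which $|G|\equiv 0$ forces the class to be $\overline{0}$ and hence $r=p>|G|-1$, incompatible with $r\leq\mathrm{dim}(J)$ (so no non-trivial ideal of this form exists there).

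Parts $2$--$5$ are then congruence bookkeeping combined with the two bounds above. For part $2$, setting $\lambda_{1}(b)=a=1$ gives $\mathrm{dim}(J)\equiv |G|\pmod p$, so $\mathrm{dim}(J)=|G|-pt$, and the inequalities $0\leq \mathrm{dim}(J)\leq |G|$ translate into $0\leq t\leq \lfloor |G|/p\rfloor=c$. For part $3$, $p\mid \mathrm{dim}(J)$ iff $|G|\lambda_{1}(b)a^{-1}=0$ in $F$; since $a^{-1}\neq 0$ and $F$ is a field, this holds iff $|G|\cdot 1_{F}=0$ or $\lambda_{1}(b)=0$, i.e.\ iff $p\mid |G|$ or $\lambda_{1}(b)=0$. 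Part $4$ combines part $3$ (which gives $\mathrm{dim}(J)=pt$ when $\lambda_{1}(b)=0$) with $1\leq \mathrm{dim}(J)\leq |G|-1$: a nonzero multiple of $p$ is at least $p$, so $t\geq 1$, while $pt\leq |G|-1$ gives $t\leq \lfloor (|G|-1)/p\rfloor=c$. Finally, for part $5$, $\mathrm{dim}(J)=1$ gives $|G|\lambda_{1}(b)a^{-1}=1_{F}$; since $1\not\equiv 0 \pmod p$, part $3$ shows $p\nmid |G|$, so $|G|$ is invertible in $F$ and solving yields $\lambda_{1}(b)=|G|^{-1}a$.

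The computations are routine; the only point requiring care is the interface between an honest integer (the dimension) and its residue class in $\mathbb{F}_{p}$ --- in particular, keeping track of why $\mathrm{dim}(J)$ is genuinely pinned down (not merely its class) precisely when $\mathrm{dim}(J)\leq p$, and correctly extracting the ranges of $t$ in parts $2$ and $4$ from the two-sided bounds on $\mathrm{dim}(J)$. I expect no real obstacle beyond this bookkeeping.
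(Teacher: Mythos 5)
Your handling of the congruence itself is correct and is essentially the paper's own argument: parts $1$ (first two claims) and $2$--$5$ all rest, as in the paper, on reading Theorem \ref{dim-primary} (part $3$) as $\mathrm{dim}(J)\equiv |G|\lambda_{1}(b)a^{-1} \pmod{p}$ and combining this with the elementary bounds $1\leq \mathrm{dim}(J)\leq |G|-1$. Your extraction of the ranges of $t$ in parts $2$ and $4$, the field argument in part $3$, and the inversion of $|G|$ in part $5$ all match the published proof; your explicit verification that $p\nmid |G|$ in part $5$ is if anything slightly more careful than the paper's one-line computation.

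The genuine gap is in the ``in particular'' clause of part $1$. That clause asserts that the dimension of \emph{every} non-trivial ideal of $R$ can be computed this way, so one must first show that every non-trivial ideal actually falls under the hypotheses of the corollary, i.e.\ admits a generator $b$ with $m_{b}(x)=x(x-a)^{s}$. Your proposal never does this: you only bound the dimension of an ideal \emph{already assumed} to be of this form, and you read the hypothesis $|G|\neq p$ as merely excluding an inconsistent congruence --- your parenthetical ``no non-trivial ideal of this form exists there'' confirms you are quantifying only over ideals of the given form, which is a strictly weaker statement. The paper supplies the missing step: $|G|-1\leq p$ together with $|G|\neq p$ forces $p\nmid |G|$ (any proper multiple of $p$ exceeds $p+1$), so by Maschke's theorem \cite[Theorem 3.4.7]{grouprings} $R$ is semisimple; hence every non-trivial ideal is principal, generated by a non-trivial idempotent $e$, for which $m_{e}(x)=x(x-1)$ is exactly of the hypothesized shape (with $a=1$, $s=1$), and $\mathrm{dim}(Re)\leq |G|-1\leq p$ places it in the equality case of part $1$. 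Without this semisimplicity argument the ``in particular'' claim is unproved: for a general group algebra a non-trivial ideal need not even be principal, let alone have a generator whose minimal polynomial is $x(x-a)^{s}$.
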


\begin{proof}

\begin{enumerate}

\item As $\mathrm{dim}(J)$ is a positive integer number in the class $|G|\lambda_{1}(b)a^{-1}$, then $r\leq \mathrm{dim}(J)$. Suppose that $\mathrm{dim}(J)\leq p$, then $\mathrm{dim}(J)$  is the minimum positive integer in the class  $|G|\lambda_{1}(b)a^{-1}$, which implies $\mathrm{dim}(J)=r$. If $|G|-1\leq p$ and $|G|\neq p$, $p\nmid |G|$, so that $R$ is semisimple (by \cite[Theorem 3.4.7]{grouprings}). Hence any non-trivial ideal is principal generated by a non-trivial idempotent and has dimension less than or equal to $p$.

\item Suppose that $\lambda_{1}(b)=a=1$. As $J$ is a proper ideal, then $\mathrm{dim}(J)=|G|-pt$  for an integer $1\leq t$, but since the minimum possible value for $\mathrm{dim}(J)$ is $r$ (by part $1$) in such case $t$ would attain the value of $c$.

\item  $\mathrm{dim}(J)= |G|\lambda_{1}(b)a^{-1}=0 $ iff $|G|$ is multiple of $p$ or $\lambda_{1}(b)=0$.

\item Since $\lambda_{1}(b)=0$, $\mathrm{dim}(J)$ is a multiple of $p$, but the greatest multiple of $p$  less than or equal to $|G|-1$ is $pc$, and thus $\mathrm{dim}(J)=pt$ for some $1\leq t \leq c$.

\item If $1=\mathrm{dim}(J)= |G|\lambda_{1}(b) a^{-1}$, then $\lambda_{1}(b)= |G|^{-1}a$.

\end{enumerate}

\end{proof}

In \cite[Theorem 6]{elia-gorla} M. Elia and E. Gorla  give a lower bound on  the dimension of a principal ideal in any group algebra when the multiplicity of $0$ as a root of the characteristic polynomial of the regular left/right representation associated to a generator is known. They also point out that this bound turns out to be the exact dimension when applied to the characteristic polynomial of an idempotent. We note that the only elements for which this equality holds are those with right regular representation having minimal polynomial with $0$ as a simple root. Thus, their result can be restated.

\begin{teo}\cite[Theorem 6]{elia-gorla}\label{reboot}
Let $b\in R$ be such that $p_{b}(x)=x^uh(x)$ where $x\nmid h(x)$,  then
 \[ |G|-u \leq \mathrm{dim}(Rb)\leq |G|-1.\]

Moreover, $\mathrm{dim}(Rb)=|G|-u$ iff $0$ is a simple root of  $m_{b}(x)$.
\end{teo}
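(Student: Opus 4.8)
The plan is to reduce everything to the rank of $r_b$ and then read off both the bounds and the equality condition from the multiplicities of the eigenvalue $0$. First I would apply Lemma \ref{dim-rank} and the rank--nullity theorem to write $\mathrm{dim}(Rb)=\mathrm{rank}([r_b]_G)=|G|-\mathrm{dim}(ker(r_b))$. Since $p_{b}(x)=x^{u}h(x)$ with $x\nmid h(x)$, the integer $u$ is precisely the algebraic multiplicity of $0$ as an eigenvalue of $r_b$, whereas $\mathrm{dim}(ker(r_b))$ is its geometric multiplicity. Throughout I take $u\ge 1$ (equivalently $Rb\neq R$), since $u=0$ would make $r_b$ invertible with $Rb=R$, a case excluded by the upper bound. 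This is exactly the refinement of Theorem \ref{dim-primary} (part 1); indeed that identity already yields $\mathrm{dim}(Rb)=\zeta_{n}+|G|-u$ with $\zeta_{n}\ge 0$, so the lower bound comes for free.

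For the two inequalities I would use only the elementary estimates on the geometric multiplicity $g_{0}:=\mathrm{dim}(ker(r_b))$. On one hand $g_{0}\le u$, because the geometric multiplicity of an eigenvalue never exceeds its algebraic multiplicity; on the other hand $g_{0}\ge 1$, because $u\ge 1$ forces $0$ to be a root of $p_{b}(x)$ and hence $ker(r_b)\neq\{0\}$. Substituting these into $\mathrm{dim}(Rb)=|G|-g_{0}$ gives $|G|-u\le\mathrm{dim}(Rb)\le|G|-1$ in a single stroke.

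For the equality characterization I would pass to the primary component $W:=ker(r_{b}^{\,n})$ associated with the factor $x$ of $m_{b}(x)$, where $n$ is the exponent of $x$ in $m_{b}(x)$. Exactly as in the proof of Theorem \ref{dim-primary} (part 1), one has $\mathrm{dim}(W)=u$ and the restriction $r_{b}|_{W}$ is nilpotent with minimal polynomial $x^{n}$. Every $0$-eigenvector lies in $W$, so $ker(r_b)=ker(r_{b}|_{W})$, and therefore $g_{0}=u$ holds if and only if $ker(r_{b}|_{W})=W$, that is, if and only if $r_{b}|_{W}=0$, that is, if and only if $n=1$. Hence $\mathrm{dim}(Rb)=|G|-u$ precisely when $0$ is a simple root of $m_{b}(x)$. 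In the language of Theorem \ref{dim-primary} (part 1) this is the same as $\zeta_{n}=0$, i.e. $ker(r_b)=ker(r_{b}^{\,n})$, which again forces $n=1$.

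The step I expect to require the most care is this equality direction, since $F$ need not be algebraically closed and a Jordan-form argument is therefore unavailable. The remedy is exactly to replace it by the Primary Decomposition Theorem already in use, which applies here because the eigenvalue $0$, and hence the irreducible factor $x$, is defined over $F$; restricting to $W$ then reduces the claim to the transparent statement that a nilpotent operator on $W$ has kernel all of $W$ if and only if its nilpotency index is $1$. Everything else is the rank--nullity theorem together with the standard inequality $g_{0}\le u$ between geometric and algebraic multiplicities.
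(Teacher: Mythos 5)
Your proposal is correct and takes essentially the same route as the paper: both arguments rest on the Primary Decomposition Theorem via Theorem \ref{dim-primary} (part 1), your inequalities $g_{0}\le u$ and $g_{0}\ge 1$ are exact restatements of the paper's $\zeta_{n}\ge 0$ and $Rb\subsetneq R$, and your equality criterion via the nilpotent restriction $r_{b}|_{W}$ with minimal polynomial $x^{n}$ matches the paper's deduction that $\ker(r_{b})=\ker(r_{b}^{n})$ forces $n=1$. Your explicit remark that one must take $u\ge 1$ (equivalently, $b$ is not a unit) is a hypothesis the paper leaves implicit when it asserts that $0$ is an eigenvalue of $r_{b}$.
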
  

\begin{proof}

By Theorem \ref{dim-primary} (part $1$),  $|G|-u \leq \mathrm{dim}(Rb)$. Furthermore, as $0$ is an eigenvalue of $r_b$, $Rb\subsetneq R$, and so $\mathrm{dim}(Rb)\leq |G|-1$.\\
Let $ m_{b}(x)=x^{n}w(x)$ with $x\nmid w(x)$.
If $ n=1$, by Theorem \ref{dim-primary} (part $1$), $\mathrm{dim}(Rb)=|G|-u$. 
Conversely, if $\mathrm{dim}(Rb)=|G|-u $, $\mathrm{dim}(ker(r_{b}))=u$. As $ker(r_{b})\subseteq ker(r_{b}^{n})\subseteq ker(r_{b}^{|G|})$, then $ker(r_{b})=ker(r_{b}^{n})=ker(r_{b}^{|G|})$. Thus the minimal polynomial of $\left.r_{b}\right|_{ker(r_{b})}=\left.r_{b}\right|_{ker(r_{b}^{n})}$ is $x=x^{n}$, and hence $n=1$.

\end{proof}

\begin{cor}\label{cor-reboot}
Let $b, b' \in R$ be such that $p_{b}(x)=x^{u}h(x)$ and $p_{b'}(x)=x^{u'}h'(x)$. Then the following holds:

\begin{enumerate}
\item If $Rb=Rb'$ and $m_{b}(x)$ has $0$ as a simple root, then $u\leq u'$.

\item If $r_{b}$ is diagonalizable, $\mathrm{dim}(Rb)=|G|-u $. In particular, if $b$ is idempotent this holds.
\end{enumerate}
\end{cor}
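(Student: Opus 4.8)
The plan is to obtain both parts as bookkeeping consequences of Theorem \ref{reboot}, which simultaneously furnishes the general lower bound $|G|-u\le \mathrm{dim}(Rb)$ and the characterization that equality holds precisely when $0$ is a simple root of $m_{b}(x)$.

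For part $1$, I would start from the observation that $Rb=Rb'$ forces $\mathrm{dim}(Rb)=\mathrm{dim}(Rb')$. Since $m_{b}(x)$ has $0$ as a simple root, the equality clause of Theorem \ref{reboot} gives $\mathrm{dim}(Rb)=|G|-u$, while the lower bound of the same theorem applied to $b'$ gives $|G|-u'\le \mathrm{dim}(Rb')$. Chaining these, $|G|-u=\mathrm{dim}(Rb)=\mathrm{dim}(Rb')\ge |G|-u'$, and cancelling $|G|$ yields $u\le u'$.

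For part $2$, I would use that $r_{b}$ diagonalizable means $m_{b}(x)$ factors into distinct linear terms, so any root it has — in particular $0$, should it occur — is simple. I would then split into two cases. If $0$ is not a root of $p_{b}(x)$, then $u=0$ and $r_{b}$ is invertible, so $Rb=R$ and $\mathrm{dim}(Rb)=|G|=|G|-u$. If $0$ is a root, it is a simple root of $m_{b}(x)$, and Theorem \ref{reboot} gives $\mathrm{dim}(Rb)=|G|-u$ directly. For the idempotent assertion, I would note that $b^{2}=b$ gives $r_{b}^{2}=r_{b^{2}}=r_{b}$, so $r_{b}$ is annihilated by $x(x-1)$; as this polynomial has simple roots, $m_{b}(x)$ divides it and $r_{b}$ is diagonalizable, reducing to the case just treated.

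The argument is essentially a chain of (in)equalities, so I do not anticipate a serious obstacle; the only subtlety requiring explicit attention is the boundary case in part $2$ where $0$ fails to be an eigenvalue of $r_{b}$, since there the clause of Theorem \ref{reboot} about $0$ being a root does not apply and one must instead argue invertibility of $r_{b}$ to recover $\mathrm{dim}(Rb)=|G|=|G|-u$.
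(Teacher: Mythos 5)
Your proof is correct and follows essentially the same route as the paper: both parts are read off directly from Theorem \ref{reboot}, with part 1 chaining the equality clause for $b$ against the lower bound for $b'$, and part 2 using that diagonalizability forces every root of $m_{b}(x)$ to be simple. Your explicit handling of the boundary case $u=0$ (where $r_{b}$ is invertible and $Rb=R$) and your remark that for an idempotent $m_{b}(x)$ merely \emph{divides} $x(x-1)$ are slightly more careful than the paper's proof, which assumes $0$ is a root and asserts $m_{b}(x)=x(x-1)$ outright, but these are refinements of the same argument rather than a different approach.
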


\begin{proof}

\begin{enumerate}

\item By Theorem \ref{reboot}, $\mathrm{dim}(Rb)=|G|-u $. Thus $|G|-u' \leq \mathrm{dim}(Rb')\\* = \mathrm{dim}(Rb)=|G|-u $, and so $u\leq u'$.

\item It follows from Theorem \ref{reboot} and  the fact that $r_{b}$ is diagonalizable iff all the roots of $m_{b}(x)$ are simple roots. In particular, if $b$ is idempotent, $m_{b}(x)=x^2-x=x(x-1)$, and hence $r_{b}$ is diagonalizable.
\end{enumerate}
\end{proof}

Every example throughout this work was carried out using SageMath \cite{sage}. 

\begin{ex}\label{ex-dim1} Let $G=\langle u,v \mid u^3=v^2=(uv)^3=1 \rangle=\{1,u,u^2v,v,u^2vu,\\* u^2,vu,uv, uvu, vuv, vu^2, uvu^2\}$ be the alternating group of degree $4$ and $R=\gf{2}G$. If $b=u+u^2vu$ then $m_{b}(x)=x(x^2+x+1)^{2}$. Thus, by Theorem \ref{dim-primary} (part 2), $4\mid \mathrm{dim}(Rb)$ and  $4\leq \mathrm{dim}(Rb)\leq 8$,  so that $\mathrm{dim}(Rb)$ is equal to $4$ or $8$. Alternatively, Theorem \ref{reboot} can be used to compute $\mathrm{dim}(Rb)$. Since $p_{b}(x)=x^{4}(x^2+x+1)^{4}$   and $0$ is a simple root of $m_{b}(x)$, then $\mathrm{dim}(Rb)=12-4=8$. On the other hand, if $b'=1+u+v+u^{2}vu$, then $m_{b'}(x)=x^{2}(x^2+x+1)^{2}$ and  $p_{b'}(x)=x^{4}(x^2+x+1)^{4}$. So, by Theorem \ref{reboot}, $8=12-4 \lneq \mathrm{dim}(Rb)$. In fact, by using Lemma \ref{dim-rank}, we get that $\mathrm{dim}(Rb)=\mathrm{rank}([r_b]_{G})=9$. This happened because the multiplicity of $0$ as a root of $m_{b'}(x)$ is not $1$ but $2$. 
\end{ex}

\begin{ex}\label{ex-dim2}
Let $G= \langle u, v \mid  u^4= 1, \  u^2 = v^2 = (uv)^2, \  vuv^{-1} = u^{-1} \rangle= \{1, u, v, u^2 , u^3v, uv, u^3, u^2v\}$ be the quaternion group and $R=\gf{3}G$. Let $b_{0}=u+2v+2u^2+2u^3v+uv+ u^2v$ and $b_{1}=1+u+v+u^3v$, then $m_{b_{i}}(x)=x (x -1 )^{2}$ for $i=0,1$. Thus, by Corollary \ref{char=p}(part $4$), $\mathrm{dim}(Rb_{0})=3t$ where $1\leq t \leq 2$, i.e., $\mathrm{dim}(Rb_{0})$ is equal to $3$ or $6$.  In addition, by Corollary \ref{char=p}(part $2$), $\mathrm{dim}(Rb_{1})=8-3t$ where $1\leq t \leq 2$, i.e., $\mathrm{dim}(Rb_{1})$ is equal to $5$ or $2$. Let $b_{2}=2+2u+v+u^{3}v$, then $m_{b_{2}}(x)=x(x-2)^{2}$. Thus, by Theorem \ref{dim-primary}(part $3$), $\mathrm{dim}(Rb_{2})=|G|\lambda_{1}(b_{2})2^{-1}=2$ so that $\mathrm{dim}(Rb_{2})$ is equal to $2$ or $5$. In fact, by using Lemma \ref{dim-rank}, we get that $\mathrm{dim}(Rb_{i})$ is equal to $6,5,5$ for $i=0,1,2$, respectively.\\  

Let
 \[ G= \langle u, v \mid    u^2 = v^5 =1, \  uv = v^{-1}u \rangle= \{1, u, v, uv^4, v^2, uv, uv^3, v^4 ,v^3, uv^2\}\]
 be the dihedral group of order $5$. Let  $\alpha$ be a root of the polynomial $p(x)=x^2+2x+2 \in \gf{3}[x]$ in some extension field of $\gf{3}$. $p(x)$ is irreducible, so that $F:=\gf{3}(\alpha )=\gf{9}$. If $R=FG$ and $b=2\alpha ^2+ (\alpha+2)u + (2\alpha +1)v+  \alpha uv^4+
 2v^2+  \alpha uv+  2\alpha ^2uv^3+ 2\alpha ^2v^4+ \alpha v^3+ (\alpha +2)uv^2 \in R$, then $m_{b}(x)=x(x - \alpha ^2)^2$.  Thus, by Theorem \ref{dim-primary} (part $3$), 

\begin{eqnarray*}
\mathrm{dim}_{F}(Rb_{2})&=&|G|\lambda_{1}(b)(\alpha^2)^{-1}\\
                        &=&10(2\alpha ^2)(2\alpha ^2)\\
                        &=&10(4\alpha ^4)\\
                        &=& 2. 
\end{eqnarray*}

Therefore $\mathrm{dim}(Rb_{2})$ is equal to $2$, $5$ or $8$. In fact, by using Lemma \ref{dim-rank}, we get that $\mathrm{dim}(Rb)=8$.  
\end{ex}


\section{Dimension of abelian codes}\label{abelian}

Recall that $F$ is a splitting field for the group  $G$ (the algebra $FG$) if $End_{FG}(V)=F$ for every irreducible $FG$-module $V$ \cite[pg 22]{larry}. Throughout this section, $F=\F$, $p=char(\F)$,  $G$ is abelian of order relatively prime to $q$,  $e$ is a non-trivial idempotent of $R$, and $I=Re$, unless stated otherwise.\\

The following result is a consequence of Theorem \ref{reboot} above.

\begin{cor} 
\label{char-dim}Let $b\in R$ such that $p_{b}=x^{u}h(x)$ with $x\nmid h(x)$. Then $\mathrm{dim}(Rb)=|G|-u$.
 \end{cor}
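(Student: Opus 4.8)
The plan is to deduce the statement directly from Corollary \ref{cor-reboot} (part 2), which already yields $\mathrm{dim}(Rb)=|G|-u$ as soon as $r_b$ is diagonalizable. So the entire task reduces to proving that, under the standing hypotheses of this section, the operator $r_b$ is always diagonalizable, i.e. that its minimal polynomial $m_b(x)$ is squarefree (has only simple roots). Everything will hinge on the two structural facts supplied by the section's assumptions: since $(|G|,q)=1$, Maschke's theorem makes $R=FG$ semisimple, and since $G$ is abelian, $R$ is commutative.

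First I would record the elementary identity that the minimal polynomial $m_b(x)$ of the \emph{operator} $r_b$ coincides with the minimal polynomial of the \emph{element} $b$ inside the algebra $R$, that is, with the least monic $f$ satisfying $f(b)=0$ in $R$. This follows from the computation $f(r_b)(w)=wf(b)$ for every $w\in R$: taking $w=1$ shows $f(r_b)=0\Rightarrow f(b)=0$, and the general $w$ gives the converse. Next I would invoke semisimplicity together with commutativity: a commutative semisimple $F$-algebra decomposes as a finite product of field extensions, $R\cong K_1\times\cdots\times K_m$. Writing $b=(b_1,\dots,b_m)$ under this isomorphism, the element-minimal polynomial of $b$ is the least common multiple of the $F$-minimal polynomials of the $b_i$. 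Each of those is irreducible over $F$, so their least common multiple is a product of pairwise distinct irreducible polynomials and is therefore squarefree. Hence $m_b(x)$ is squarefree, $r_b$ is diagonalizable, and Corollary \ref{cor-reboot} (part 2) completes the argument.

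The step I expect to demand the most care is the squarefreeness argument, and in particular flagging that commutativity (not merely semisimplicity) is what makes it work. In a noncommutative semisimple algebra the analogous conclusion fails outright: a Wedderburn component $M_n(F)$ contains nilpotent matrices, whose associated right-multiplication operators have minimal polynomial $x^2$ and are not diagonalizable. It is precisely the abelian hypothesis on $G$ that excludes this, by forcing every Wedderburn component to be a field and hence every element of $R$ to be semisimple. The only other point I would verify cleanly is the passage $R\cong K_1\times\cdots\times K_m$; this is the standard Wedderburn decomposition of a commutative semisimple algebra, available here because $(|G|,q)=1$ guarantees semisimplicity and $G$ abelian guarantees commutativity.
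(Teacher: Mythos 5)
Your proof is correct in substance but takes a genuinely different route from the paper's. The paper extends scalars to a splitting field $\mathbf{F}\supseteq F$, observes that the primitive idempotents of $\mathbf{F}G$ form a basis of $\mathbf{F}G$ in which $r_b$ acts diagonally, deduces that $m_{b}(x)$ has only simple roots (in particular $0$ is a simple root), and then applies Theorem \ref{reboot}; you instead stay over $F$, identify the minimal polynomial of the operator $r_b$ with that of the element $b$ via $f(r_b)(w)=wf(b)$, and use the Wedderburn decomposition $R\cong K_1\times\cdots\times K_m$ of the commutative semisimple algebra $R$ into field extensions to see that $m_b$ is a least common multiple of irreducibles, hence squarefree. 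Your route is more self-contained: it avoids citing the existence of splitting fields, the one-dimensionality of minimal ideals over $\mathbf{F}$, and the (tacit, in the paper) invariance of the minimal polynomial under scalar extension; the paper's route, by contrast, builds exactly the splitting-field apparatus ($\mathbf{F}$ and the idempotent basis $\eta$) that it reuses throughout Section \ref{abelian}, so its proof doubles as setup, and you correctly isolate commutativity as the hypothesis that kills the nilpotents present in a noncommutative Wedderburn component. One caveat: your final inference ``$m_b$ squarefree $\Rightarrow r_b$ diagonalizable'' is false over a field that is not algebraically closed --- e.g.\ for $R=\mathbb{F}_2C_3$ and $b$ a generator of $C_3$, $m_b(x)=x^3-1=(x+1)(x^2+x+1)$ is squarefree yet $r_b$ is not diagonalizable over $\mathbb{F}_2$ --- so Corollary \ref{cor-reboot} (part 2) does not literally apply as stated (the paper's own proof of that corollary commits the same conflation of ``all roots simple'' with diagonalizability over $F$). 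The fix is immediate and costs you nothing: squarefreeness already gives that $0$ is a simple root of $m_b(x)$, which is precisely the condition in the ``iff'' of Theorem \ref{reboot}, so cite Theorem \ref{reboot} directly instead of Corollary \ref{cor-reboot} and your argument is complete.
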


\begin{proof}

Let $\mathbf{F}$ be a finite extension of $F$ that is a splitting field for $G$ (this exists by\cite[Proposition 7.13]{rep3}). The minimal ideals of $\mathbf{F}G$ have dimension $1$ (by \cite[Corollary 4.4]{larry}), and so their generating idempotents form a basis for $\mathbf{F}G$. Hence the minimal polynomial of $r_{b}$ (seen as a $\mathbf{F}$-automorphism of $\mathbf{F}G$) splits into distinct linear factors. So $0$ is a simple root of $m_{b}(x)\in F[x]$. Thus, by Theorem \ref{reboot}, $\mathrm{dim}(Rb)=|G|-u$.

\end{proof}

Note that Corollary \ref{char-dim} does not depend on the finite condition of $F$.

The mapping $\alpha: G\rightarrow G$ given by $\alpha(x)=x^q$ is an automorphism of $G$, so the group $H:=\langle \alpha \rangle$ acts on $G$ by evaluation. The orbits under this action are called $q$-orbits (also know as $q$-subsets). It is well-known (see, e.g., \cite[Theorem 1.3]{raul}) that  a bijection exists between the minimal ideals of $R$ and the $q$-orbits,  under which the size of a $q$-orbit equals the dimension of the corresponding ideal. We summarize this in the following theorem. 

\begin{teo}\label{barato} 
Let $\{U_{j}\}_{j=1}^{w}$ be the collection of the $q$-orbits of $G$. Let $R=\oplus_{j=1}^{r} I_{j} $ be the decomposition of $R$ into minimal ideals. Then $r=w$ and for a proper indexation, $\mathrm{dim}(I_{j})= |U_{j}|$ for $j=1,...,r$.
\end{teo}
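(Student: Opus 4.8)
The plan is to compute both sides through the character theory of $G$ over a splitting field and then descend to $\mathbb{F}_{q}$ by a Galois argument. Since $(|G|,q)=1$, $R=\mathbb{F}_{q}G$ is semisimple (Maschke), so the decomposition $R=\bigoplus_{j=1}^{r}I_{j}$ into minimal ideals exists and is exactly what we must count and measure. First I would fix a finite extension $\mathbf{F}/\mathbb{F}_{q}$ that is a splitting field for $G$ (as in the proof of Corollary \ref{char-dim}). Because $G$ is abelian, the minimal ideals of $\mathbf{F}G$ all have dimension $1$ (by \cite[Corollary 4.4]{larry}), so there are exactly $|G|$ of them; their primitive idempotents are indexed by the characters $\chi\in\hat{G}:=\mathrm{Hom}(G,\mathbf{F}^{\times})$ via $e_{\chi}=|G|^{-1}\sum_{g\in G}\chi(g^{-1})g$, giving $\mathbf{F}G=\bigoplus_{\chi\in\hat{G}}\mathbf{F}e_{\chi}$.

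Next I would descend to $\mathbb{F}_{q}$. The Galois group $\mathrm{Gal}(\mathbf{F}/\mathbb{F}_{q})$ is generated by the Frobenius $\sigma\colon t\mapsto t^{q}$, which acts on $\mathbf{F}G$ coefficientwise and permutes the $e_{\chi}$. A direct computation gives $\sigma(e_{\chi})=e_{\chi\circ\alpha}$, where $\alpha(g)=g^{q}$, using $\chi(g)^{q}=\chi(g^{q})$. By the standard Galois descent for semisimple algebras, the primitive idempotents of $R$ are precisely the $\sigma$-orbit sums $\sum_{\chi'\in O}e_{\chi'}$ over $\sigma$-orbits $O\subseteq\hat{G}$; extending scalars shows $\mathbf{F}\otimes_{\mathbb{F}_{q}}I_{O}=\bigoplus_{\chi'\in O}\mathbf{F}e_{\chi'}$, hence $\mathrm{dim}_{\mathbb{F}_{q}}(I_{O})=|O|$. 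Thus the minimal ideals of $R$ are in bijection with the $\sigma$-orbits on $\hat{G}$, each ideal having dimension equal to the size of its orbit.

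It then remains to match the $\sigma$-orbits on $\hat{G}$ — which are exactly the orbits of $\hat{\alpha}\colon\chi\mapsto\chi\circ\alpha$ — with the $q$-orbits on $G$, i.e. the orbits of $\alpha$. Here I would argue that $\alpha$ and its dual $\hat{\alpha}$ have identical cycle types. For each $d\ge 1$ the self-map $\psi_{d}=\alpha^{d}-\mathrm{id}$ is an endomorphism of $G$ (since $G$ is abelian); the fixed points of $\alpha^{d}$ on $G$ form $\ker\psi_{d}$, while the fixed points of $\hat{\alpha}^{d}$ on $\hat{G}$ are the characters trivial on $\mathrm{Im}\,\psi_{d}$, whose number is $|G/\mathrm{Im}\,\psi_{d}|=|\ker\psi_{d}|$ by character duality together with the first isomorphism theorem. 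So $\alpha^{d}$ and $\hat{\alpha}^{d}$ have the same number of fixed points for every $d$, and since the fixed-point counts of all powers determine the cycle type (Möbius inversion), $\alpha$ and $\hat{\alpha}$ share the same multiset of orbit sizes. Combining this with the previous paragraph yields $r=w$ together with a bijection under which $\mathrm{dim}(I_{j})=|U_{j}|$. The main obstacle is exactly this last duality step: one must pin down that the Frobenius action on idempotents corresponds to the $q$-power map and then transfer orbit data from $\hat{G}$ back to $G$; everything else is a packaging of semisimplicity and the one-dimensionality of minimal ideals over a splitting field.
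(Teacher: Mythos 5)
Your proof is correct, but it takes a genuinely different route at the crucial step. Note first that the paper does not prove Theorem \ref{barato} where it is stated: it is quoted as known from \cite[Theorem 1.3]{raul}, and the paper's own independent proof comes only later, as Theorem \ref{d-ind}(2), where an explicit change-of-basis transformation $U$ sending the basis $G$ to the basis $\eta$ of primitive idempotents of $\mathbf{R}=\mathbf{F}G$ is shown (via the intertwining identity $A[\alpha]_{G}=[\alpha]_{\eta}A$) to be an isomorphism of $H$-sets, after which Galois descent transfers orbits of $\eta$ to primitive idempotents of $R$. You share the paper's first two ingredients --- splitting field, one-dimensionality of minimal ideals, the identification of the Frobenius action with the $q$-power action (your computation $\sigma(e_{\chi})=e_{\chi\circ\alpha}$ is the character-side analogue of the paper's $\phi^{-1}\odot f=\alpha\ast f$), and descent giving $\dim_{\mathbb{F}_q}(I_{O})=|O|$ --- but you replace the explicit equivariant bijection $G\cong\eta$ by a counting argument: $\alpha$ on $G$ and $\hat{\alpha}$ on $\hat{G}$ have the same number of fixed points for every power, since both equal $|\ker\psi_{d}|$ by duality and the first isomorphism theorem, and Möbius inversion then forces equal multisets of orbit sizes. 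This is exactly enough for the theorem as stated (``for a proper indexation''), it is elementary and self-contained, and it avoids the intertwining claim for $U$, which is arguably the most delicate point of the paper's argument. What it does not buy is the explicit correspondence itself: the paper's construction of $U$ is what produces the indicator $D=U^{-1}$ and the weight formula $\mathrm{dim}_{F}(Rf)=wt_{G}(D(f))$ of Theorem \ref{d-ind}(1), whereas your cycle-type comparison yields only an abstract matching of sizes, not a canonical bijection between $q$-orbits and minimal ideals. One small point to make explicit if you write this up: counting characters trivial on $\mathrm{Im}\,\psi_{d}$ as $|G/\mathrm{Im}\,\psi_{d}|$ uses that $\mathbf{F}^{\times}$ contains a cyclic subgroup of order $\exp(G)$, which holds here precisely because $\mathbf{F}=F(\theta)$ with $\theta$ a primitive $m$-th root of unity and $(|G|,q)=1$.
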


\begin{cor}($q$-orbits bound)\label{cota-q-orbits}
If $Y=\{|U_{j}| : \,  |U_{j}|= |G|\lambda_{1}(e),  \,\, j=1,...,r \}$ and $I$ is a minimal ideal,  then
\[ \min(Y) \leq \mathrm{dim}(I)\leq \max (Y) \]
\end{cor}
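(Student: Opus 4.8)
The plan is to show that $\mathrm{dim}(I)$ is itself one of the elements of the set $Y$; once this is established the two inequalities $\min(Y)\leq \mathrm{dim}(I)\leq \max(Y)$ are immediate, since $Y$ is then a nonempty finite set of positive integers that contains $\mathrm{dim}(I)$. Thus the corollary reduces to combining two facts already available in the excerpt: that $\mathrm{dim}(I)$ equals the size of the $q$-orbit associated with $I$, and that $\mathrm{dim}(I)$ reduces modulo $p$ to the class $|G|\lambda_{1}(e)$.

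First I would determine the minimal polynomial of the generating idempotent. Writing $e$ for the primitive idempotent with $I=Re$, the isomorphism $\varphi$ sends $e$ to $r_{e}$, so $r_{e}^{2}=r_{e}$ and $m_{e}(x)\mid x(x-1)$; since $e$ is non-trivial, neither linear factor can be omitted, whence $m_{e}(x)=x(x-1)$. This is exactly the hypothesis of Theorem \ref{dim-primary} (part $3$) with $a=1$ and $s=1$, and it yields $\mathrm{dim}(I)=|G|\lambda_{1}(e)$, which under the convention of Section \ref{app-pri-dec} means that the integer $\mathrm{dim}(I)$ reduces to the class $|G|\lambda_{1}(e)$ modulo $p$.

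Next I would invoke Theorem \ref{barato}. Because $I$ is minimal, the bijection between minimal ideals and $q$-orbits supplies an index $j_{0}$ with $\mathrm{dim}(I)=|U_{j_{0}}|$. Substituting into the congruence from the previous step shows that $|U_{j_{0}}|$ reduces to $|G|\lambda_{1}(e)$ modulo $p$, so $|U_{j_{0}}|$ meets the defining condition of $Y$ and therefore $\mathrm{dim}(I)=|U_{j_{0}}|\in Y$. In particular $Y\neq\emptyset$, so $\min(Y)$ and $\max(Y)$ exist and bracket $\mathrm{dim}(I)$.

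There is no serious computational obstacle here; the whole content is the observation that the target dimension is already a member of the sandwiching set. The one point I would take care to state precisely is the reading of the equation $|U_{j}|=|G|\lambda_{1}(e)$ in the definition of $Y$: following the convention introduced for Lemma \ref{cong}, it is a congruence modulo $p$ between the integer $|U_{j}|$ and the field element $|G|\lambda_{1}(e)$, rather than an equality of integers. With that understanding fixed, the argument above closes the proof.
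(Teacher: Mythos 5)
Your proposal is correct and takes essentially the same route as the paper: its proof of Corollary \ref{cota-q-orbits} is exactly the combination of Theorem \ref{barato} with Theorem \ref{dim-primary} (part $3$), which you spell out in full, including the key point of reading $|U_{j}|=|G|\lambda_{1}(e)$ as a congruence modulo $p$ under the paper's stated convention. You merely make explicit what the paper leaves implicit, namely the computation $m_{e}(x)=x(x-1)$ for a non-trivial idempotent and the observation that $\mathrm{dim}(I)$ is itself a member of $Y$.
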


\begin{proof}
It follows from Theorems \ref{barato} and \ref{dim-primary} (part $3$).
\end{proof}

By Theorem \ref{barato} $1\leq |Y|$. If $ |Y|=1$, the bound in Corollary \ref{cota-q-orbits}  gives us the exact dimension. \\

The following two results offer a solution to the problem of computing the dimension of any abelian code, but first a set-up is introduced. Let  $m$ be the exponent of $G$. Let $\theta$ be a $m$-th primitive root of unity in some extension field of $F$, then  $\mathbf{F}:=F(\theta)$ is a splitting field for $G$ (see  \cite[Corollary 24.11]{larry}, \cite[Theorem 17.1]{rep3}). Let $\mathbf{R}=\mathbf{F}G$ and $\mathbf{R}=\oplus_{j=1}^{t} \mathbf{R}e_{j} $ be  the decomposition of $\mathbf{R}$ into minimal ideals where $e_{j}$ is idempotent for all $j$. Then $\mathrm{dim}_{\mathbf{F}}(\mathbf{R}e_{j})=1$ for all $j$ (see  \cite[Corollary 4.4]{larry}). This implies that $\eta:=\{e_{j}\}_{j=1}^{t}$ is an $\mathbf{F}$-basis for $\mathbf{R}$. On the other hand,  $\alpha$ can be extended linearly to an $\mathbf{F}$-algebra automorphism of $\mathbf{R}$, and thus $H$ acts on $\eta$ by evaluation (because $\alpha$ sends primitive idempotents into primitive idempotents). Let $U$ be the $\mathbf{F}$-automorphism  of $\mathbf{R}$ that sends $G$ into $\eta$ and $A=[U]_{G}$, then $A[\alpha]_{G}= [\alpha]_{\eta}A$ (see \cite[Theorem 14, Ch. 3]{hoffman-kunze}).    Hence $U$ defines an isomorphism of $H$-sets, and so
\begin{equation}\label{iso-Hsets}
G\cong\eta
\end{equation}
as $H$-sets.

\begin{teo}\label{d-ind}

The following statements hold:

\begin{enumerate}

\item If $f\in R$  is a primitive idempotent, and $D$ is the inverse of $U$, then  $\mathrm{dim}_{F}(Rf)=wt_{G}(D(f))$.

\item  $U$ induces a bijection between the $q$-orbits and the minimal ideals of $R$, under which the size of a $q$-orbit equals the dimension of the corresponding ideal.

\end{enumerate}

\end{teo}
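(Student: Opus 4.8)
The plan is to prove both parts simultaneously by first pinning down, over the splitting field $\mathbf{F}$, the exact shape of a primitive idempotent of $R$ relative to the basis $\eta$, and then transporting everything through the $H$-set isomorphism of \eqref{iso-Hsets}. I would start from the observation that a primitive idempotent $f$ of $R$, viewed inside $\mathbf{R}$, is a sum of pairwise orthogonal primitive idempotents of $\mathbf{R}$. Since the $e_{j}$ are precisely those idempotents and they are orthogonal, writing $f=\sum_{j}c_{j}e_{j}$ and imposing $f=f^{2}$ forces $c_{j}\in\{0,1\}$, so $f=\sum_{e_{j}\in S}e_{j}$ for some subset $S\subseteq\eta$. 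Because $f$ has coefficients in $F$, it is fixed by $\mathrm{Gal}(\mathbf{F}/F)$, and hence $S$ is stable under the Galois action on $\eta$.

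The key step, which I expect to be the main obstacle, is to identify the Galois action on $\eta$ with the $H$-action that defines the $q$-orbits. Writing each primitive idempotent as $e_{\chi}=|G|^{-1}\sum_{g\in G}\chi(g^{-1})g$ for a character $\chi$ of $G$ over $\mathbf{F}$, a direct computation should show that the Frobenius generator $\phi\colon z\mapsto z^{q}$ of $\mathrm{Gal}(\mathbf{F}/F)$ acts on coefficients by $\phi(e_{\chi})=e_{\chi\circ\alpha}$, whereas the extension of $\alpha$ to $\mathbf{R}$ satisfies $\alpha(e_{\chi})=e_{\chi\circ\alpha^{-1}}$. Thus $\phi$ and $\alpha$ act on $\eta$ as mutually inverse permutations, so they generate the same subgroup of $\mathrm{Sym}(\eta)$; in particular the Galois orbits on $\eta$ coincide with the $H$-orbits. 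Consequently the Galois-stable set $S$ above is a union of $H$-orbits, and since $f$ is primitive in $R$ it must be a single $H$-orbit $O$. As dimension is preserved under base change and each $\mathbf{R}e_{j}$ is one-dimensional, $\mathrm{dim}_{F}(Rf)=\mathrm{dim}_{\mathbf{F}}(\mathbf{R}f)=|S|=|O|$.

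With this in hand, part $1$ is immediate: applying $D=U^{-1}$ and using that $D$ carries the basis $\eta$ bijectively onto the basis $G$, one obtains $D(f)=\sum_{e_{j}\in O}D(e_{j})$, a sum of $|O|$ distinct elements of $G$, each with coefficient $1$, whence $wt_{G}(D(f))=|O|=\mathrm{dim}_{F}(Rf)$. For part $2$, I would observe that the assignment $O\mapsto R\bigl(\sum_{e_{j}\in O}e_{j}\bigr)$ is a bijection from the $H$-orbits on $\eta$ to the minimal ideals of $R$: every primitive idempotent of $R$ arises this way by the previous paragraph, and conversely the sum over any $H$-orbit is Galois-fixed, hence lies in $R$, and is primitive. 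Since $U$ is an isomorphism of $H$-sets, it maps the $q$-orbits of $G$ bijectively and size-preservingly onto the $H$-orbits of $\eta$; composing the two bijections exhibits $U$ as inducing the claimed correspondence between $q$-orbits and minimal ideals of $R$, with the dimension of each ideal equal to the size of the associated $q$-orbit. This recovers Theorem \ref{barato} while realizing its bijection explicitly through $U$, the only genuinely new ingredient beyond the section's set-up being the Frobenius-versus-$\alpha$ computation of the middle paragraph.
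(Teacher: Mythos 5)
Your proposal is correct and takes essentially the same route as the paper's proof: both rest on showing that the $H$-orbits and the $\mathrm{Gal}(\mathbf{F}/F)$-orbits on $\eta$ coincide, then writing a primitive idempotent of $R$ as a single orbit sum (Galois descent), using $\mathrm{dim}_{F}(Rf)=\mathrm{dim}_{\mathbf{F}}(\mathbf{R}f)$, and transporting orbits through the $H$-set isomorphism $U$ of \eqref{iso-Hsets}. The only cosmetic differences are that you establish the orbit coincidence via the explicit character formula $e_{\chi}=|G|^{-1}\sum_{g\in G}\chi(g^{-1})g$, whereas the paper computes directly with $f=f^{q}$, and that you re-derive inline the descent facts the paper cites from \cite{rep3} and \cite{split}.
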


\begin{proof}
Note that $\alpha$ acting in an element of $\eta$ is the same as the inverse of the Frobenius automorphism acting by evaluation on the coefficients of this element. Let $\ast$ and $\odot$ denote the actions of $H$ and $Gal(\mathbf{F}/F)$ in $\eta$, respectively. If $f=\sum_{g \in G} a_{g}g \in \eta $ and $\phi\in Gal(\mathbf{F}/F)$ denotes the Frobenius automorphism, then  $\phi^{-1}\odot f=\phi^{-1}\odot f^{q}= \phi^{-1}\odot ( \sum_{g \in G} \phi(a_{g})g^{q})=\alpha \ast f$. Thus the actions of $H$ and $Gal(\mathbf{F}/F)$ generate the same orbits. 

\begin{enumerate}

\item Let $D$ be the $\mathbf{F}$-automorphism of $\mathbf{R}$ given by $D(x):= U^{-1}(x)$, where $U$ is the change of basis transformation given above. Then  $\left.D\right|_{\eta}:\eta \rightarrow G$ is an isomorphism of $H$-sets. Since $H$ and $Gal(\mathbf{F}/F)$ generate the same orbits in $\eta$, by the Galois descending argument (see  \cite[Proposition $7.18$]{rep3},\cite[Proposition III.6]{split}),  $O\in \eta /H$ exists such that  $f=\sum_{z\in O}z$. Therefore  $D(f)=\sum_{z\in O}D(z)$, which is the sum of  the elements of $G$ belonging to the $q$-orbit $D(O)$, and so $wt_{G}(D(f))=|O|$. On the other hand, by \cite[Theorem III.8]{split},  $\mathrm{dim}_{F}(R f)=\mathrm{dim}_{\mathbf{F}}(\mathbf{R} f)$. Thus, since $\mathbf{R} f=\oplus_{z\in O}\mathbf{R}z$, $\mathrm{dim}_{F}(R f)=|O|=wt_{G}(D(f))$ (because $\mathrm{dim}_{\mathbf{F}}(\mathbf{R}z)=1$ for all $z\in O$).

\item Since $\left.U\right|_{G}: G \rightarrow \eta$ is an isomorphism of $H$-sets,  $U$ induces a size-preserving  bijective correspondence $\widehat{U}:G/H\rightarrow \eta/H$ given by $\widehat{U}(S)=U(S)$. Let $\eta'$ the collection of the primitive idempotents of $R$. Since  $H$ and $Gal(\mathbf{F}/F)$ generate the same orbits in $\eta$, by the Galois descending argument,  $v: \eta/ H \rightarrow \eta'$ given by $v(O)=\sum_{o\in O}o$ is a bijection. Thus $v\circ \widehat{U}$ is a bijection between the $q$-orbits and the primitive idempotents of $R$.  Now, by a similar argument to the one presented at the end of the proof of part $1$,  if $f=v(O)$ where $O \in \eta/ H$, then $\mathrm{dim}_{F}(R f)=|O|=|\widehat{U}^{-1}(O)|=|\widehat{U}^{-1} \circ v^{-1}(f)|$.

\end{enumerate}

\end{proof}

Observe that Theorem \ref{d-ind} (part $2$) implies Theorem \ref{barato}. The $\mathbf{F}$-automor\- -phism $D$ presented in Theorem \ref{d-ind} (part $1$)  will be called \textbf{\textit{ the dimensions indicator of  $R$ associated with $\mathbf{F}$, }} or simply \textbf{\textit{the indicator of }} $R$. Note that  as any abelian code is the direct sum of minimal ideals, the indicator of $R$ can be also applied to compute the dimension of any abelian code. It is shown that the indicator of $R$ is related to the discrete Fourier transform. 
The group of characters $G^{*}$ of $G$ is the set of the group homomorphisms from $G$ to $\mathbf{F}-\{0\}$ with the multiplication of functions. It is well-known that $G^{*}\cong G$ (see, e.g., \cite[Section 1.1]{abelian-c2}). The discrete Fourier transformation $\epsilon$ (see \cite[Section II.A]{split}, \cite[Section 2]{abelian-c2}) is the isomorphism of $\mathbf{F}$-algebras that goes from $\mathbf{F}G^{*}$ to its Artin-Wederburn decomposition $\mathbf{F}^{|G|}$ given by $\epsilon(f)=(f(g))_{g \in G}$. Let $\lambda$ be its inverse, and $\mu$ be the canonical basis of $\mathbf{F}^{|G|}$. Then there is an indexation of  $G^*$ and $\mu$ such that $A=_{G^*}[\lambda]_{\mu}$, and so $A^{-1}=_{\mu}[\epsilon]_{G^*}$ (because $\mathbf{F}G^{*}\cong \mathbf{F}^{|G|}\cong \mathbf{R}$). 
 
Our final step is to provide a way to explicitly compute the indicator of $R$ (i.e., to compute $A^{-1}$) using tensor product algebras and Corollary \ref{char=p}.

Let $G=C_{n_1}\times \cdots \times C_{n_s}$ be a decomposition of $G$ as a product of cyclic groups  with $C_{n_i}=\langle x_{i} \rangle=\{1, x_{i},..., x_{i}^{n_{i}-1}\}$ for all $i$. Let $R_{i}:=\mathbf{F}C_{n_{i}}$,  $l_{i}:R_{i}\rightarrow R_{i}$ be the $\mathbf{F}$-linear transformation given by $l_{i}(y)=x_{i}y$, $\{ \gamma_{ij_{i}}\}_{j_{i}=1}^{n_{i}}$, and $\{ e_{ij_{i}}\}_{j_{i}=1}^{n_{i}}$ be the spectrum of $l_{i}$ and the collection of primitive idempotents of $R_{i}$ for all $i$, respectively.  Let $c_{i}\equiv |C_{n_{i}}|^{-1}\,mod\, p$ for $i=1,...,s$. 

\begin{teo}\label{comp-dime}
Assuming the previous notation, the following holds: $[e_{ij_{i}}]_{C_{n_i}}$ $=c_{i}(1,\gamma_{ij_{i}}^{n_{i}-1}, \gamma_{ij_{i}}^{n_{i}-2}, ..., \gamma_{ij_{i}})$ for $i=1,...,s$ and $j_{i}=1,...,n_{i}$. Furthermore, the coordinate vectors of the primitive idempotents of $\mathbf{R}$ with respect to $G$ are given by $\{[e_{1j_{1}}]_{C_{n_{1}}}\otimes \cdots \otimes [e_{sj_{s}}]_{C_{n_{s}}} : \, j_{i}=1,...,n_{i}  \text{ \, for \,} i=1,...,s  \}$, where $\otimes$ denotes the Kronecker product of vectors.
\end{teo}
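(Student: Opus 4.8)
The plan is to prove part 1 by an explicit computation inside each cyclic group algebra $R_i=\mathbf{F}C_{n_i}$, and then to deduce part 2 by passing from the direct product $G=C_{n_1}\times\cdots\times C_{n_s}$ to a tensor product of group algebras, under which primitive idempotents multiply and coordinate vectors become Kronecker products.

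First I would fix $i$ and analyze $l_i$. Since $\gcd(n_i,q)=1$ the polynomial $x^{n_i}-1$ is separable, and because $n_i$ divides the exponent $m$ of $G$, the splitting field $\mathbf{F}=F(\theta)$ contains all $n_i$-th roots of unity. As $\{1,x_i,\dots,x_i^{n_i-1}\}$ is a basis and $l_i^{n_i}=\mathrm{id}$, the minimal (hence characteristic) polynomial of $l_i$ is $x^{n_i}-1$, so the spectrum $\{\gamma_{ij_i}\}$ consists precisely of the $n_i$ distinct $n_i$-th roots of unity. For a fixed eigenvalue $\gamma:=\gamma_{ij_i}$ I would propose the candidate $e:=c_i\sum_{k=0}^{n_i-1}\gamma^{-k}x_i^{k}$ and verify directly that $l_i(e)=\gamma e$ and that $e^2=e$; the key point is that for each residue $m$ there are exactly $n_i$ pairs $(k,l)$ with $k+l\equiv m\pmod{n_i}$, so that $e^2=c_i^2 n_i\sum_m\gamma^{-m}x_i^m=e$ using $c_in_i=1$ in $\mathbf{F}$. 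Since $\lambda_1(e)=c_i$ and $m_e(x)=x(x-1)$, Theorem \ref{dim-primary} (part $3$) gives $\dim_{\mathbf{F}}(R_i e)=n_ic_i\equiv 1\pmod p$, in agreement with $R_i e$ being one of the one-dimensional minimal ideals guaranteed by the splitting hypothesis \cite[Corollary 4.4]{larry}. Letting $\gamma$ run over all $n_i$ roots produces $n_i$ pairwise orthogonal nonzero idempotents summing to $1$ (at the coordinate $x_i^k$ the sum is $c_i\sum_{\gamma}\gamma^{-k}$, which vanishes unless $k\equiv 0$), so they form a complete system of primitive idempotents and exhaust $\{e_{ij_i}\}$. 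Finally, rewriting $\gamma^{-k}=\gamma^{n_i-k}$ turns the coordinate vector of $e$ in the basis $C_{n_i}=\{1,x_i,\dots,x_i^{n_i-1}\}$ into $c_i(1,\gamma^{n_i-1},\gamma^{n_i-2},\dots,\gamma)$, which is the asserted formula.

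For part 2 I would use the standard isomorphism of $\mathbf{F}$-algebras $\mathbf{R}=\mathbf{F}G\cong R_1\otimes_{\mathbf{F}}\cdots\otimes_{\mathbf{F}}R_s$ sending $x_1^{a_1}\cdots x_s^{a_s}$ to $x_1^{a_1}\otimes\cdots\otimes x_s^{a_s}$. Because each $R_i\cong\mathbf{F}^{n_i}$ splits, the tensor product is again a product of copies of $\mathbf{F}$, and its primitive idempotents are exactly the elementary tensors $e_{1j_1}\otimes\cdots\otimes e_{sj_s}$ of primitive idempotents of the factors; counting gives $\prod_i n_i=|G|$ of them, matching the number of minimal ideals of $\mathbf{R}$. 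Ordering the basis $G$ lexicographically with respect to the factorization, the coordinate vector of an elementary tensor is the Kronecker product of the coordinate vectors of its factors, whence $[e_{1j_1}\otimes\cdots\otimes e_{sj_s}]_{G}=[e_{1j_1}]_{C_{n_1}}\otimes\cdots\otimes[e_{sj_s}]_{C_{n_s}}$.

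I expect the main obstacle to be bookkeeping rather than conceptual: one must fix a single ordering of $G$ that is simultaneously compatible with the basis $C_{n_i}$ of every factor and with the Kronecker product convention, and then check that the algebra isomorphism $\mathbf{F}G\cong\bigotimes_i R_i$ carries the tensor product of idempotents to the Kronecker product of their coordinate vectors. The only genuinely delicate point in part 1 is justifying that the $n_i$ exhibited idempotents are primitive and that they are all of them; I would settle this via the orthogonality-and-completeness argument above together with the one-dimensionality of minimal ideals over the splitting field, rather than from the congruence $\dim_{\mathbf{F}}(R_ie)\equiv 1\pmod p$ alone, which does not by itself force the dimension to equal $1$.
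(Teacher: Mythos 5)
Your proposal is correct, and for the first assertion it takes a genuinely different route from the paper's. The paper does not guess-and-verify: it starts from the abstract existence of the primitive idempotents $e_{ij_i}$ (each minimal ideal of $R_i$ is one-dimensional over the splitting field $\mathbf{F}$ and is $l_i$-invariant, so each $e_{ij_i}$ is an eigenvector of $l_i$), observes that $l_i$ has $n_i$ distinct eigenvalues so every eigenspace is one-dimensional, exhibits $1+\gamma^{n_i-1}x_i+\gamma^{n_i-2}x_i^2+\cdots+\gamma x_i^{n_i-1}$ as an eigenvector for $\gamma=\gamma_{ij_i}$, and concludes that $e_{ij_i}$ must be a scalar multiple of it; the scalar is then pinned down as $c_i$ by Corollary \ref{char=p} (part $5$), i.e. $\lambda_1(e_{ij_i})=|C_{n_i}|^{-1}$ because $\dim(R_ie_{ij_i})=1$ is already known. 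You instead write down the candidate $c_i\sum_{k}\gamma^{-k}x_i^{k}$ and verify idempotency, the eigenvector property, orthogonality and summation to $1$ by the root-of-unity identity $\sum_{\gamma}\gamma^{-k}=0$ for $n_i\nmid k$ (orthogonality of distinct $e_\gamma$, which you assert rather than compute, follows from the same geometric-series identity applied to $\delta/\gamma\neq 1$) --- in effect re-deriving the discrete Fourier/character orthogonality relations from scratch. Your route is more elementary and self-contained: it never invokes Corollary \ref{char=p}, and you correctly observe that the congruence $\dim_{\mathbf{F}}(R_ie)\equiv 1\pmod{p}$ alone would not force primitivity, which is why your completeness argument via $n_i$ orthogonal nonzero idempotents in an $n_i$-dimensional algebra is the right substitute. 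What the paper's route buys is exactly what the authors say after the theorem: they wanted the indicator to arise as an application of Theorem \ref{dim-primary} (part $3$), independently of the classical character-theoretic computation that your verification essentially reproduces. For the second assertion the two arguments coincide: both pass to $T=R_1\otimes\cdots\otimes R_s$, note that the $\prod_i n_i$ elementary tensors $e_{1j_1}\otimes\cdots\otimes e_{sj_s}$ form an orthogonal family of idempotents of the right cardinality and hence exhaust the primitive idempotents of $T$, and transport coordinate vectors through the isomorphism $\chi(x_1^{\epsilon_1}\cdots x_s^{\epsilon_s})=x_1^{\epsilon_1}\otimes\cdots\otimes x_s^{\epsilon_s}$ using a basis ordering compatible with the Kronecker product, just as you do with your lexicographic ordering.
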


\begin{proof}

Let $\beta=C_{n_{1}}\otimes\cdots \otimes C_{n_{s}} $ be the typical basis for the tensor product $R$. As $\mathbf{F}$ is a splitting field for every $C_{n_{i}}$, then every minimal ideal has dimension $1$ as a $\mathbf{F}$-vector space. Thus, since the ideals of $R_{i}$ are $l_{i}$-invariant vector subspaces, every primitive idempotent in $R_{i}$ is an eigenvector of $l_{i}$. Suppose $l_i (e_{ij_{i}})=\gamma_{ij_{i}}e_{ij_{i}}$ where $j_{i}=1,...,n_{i}$ for $i=1,...,s$. The minimal polynomial of $l_{i}$ is $x^{n_{i}}-1$, and therefore $l_{i}$ has as many distinct eigenvalues as $n_{i}$ (because $(|G|,q)=1$), implying that every eigenspace of $l_{i}$ in $R_{i}$ has dimension $1$ for all $i$. Thus, as $1+\gamma_{ij_{i}}^{n_{i}-1}x_i + \gamma_{ij_{i}}^{n_{i}-2}x_{i}^2+ ...+ \gamma_{ij_{i}}x_{i}^{n_{i}-1}$ is an eigenvector of $l_{i}$ associated  to the eigenvalue $\gamma_{ij_{i}}$ for all $i$, then  $e_{ij_{i}}$ must be a multiple of this element, and thus  $e_{ij_{i}}=c_{i}(1+\gamma_{ij_{i}}^{n_{i}-1}x_i + \gamma_{ij_{i}}^{n_{i}-2}x_{i}^2+ ...+ \gamma_{ij_{i}}x_{i}^{n_{i}-1})$ for all $i$ (by Corollary \ref{char=p}, part $5$). 
On the other hand, if  $T=R_{1}\otimes \cdots \otimes R_{s}$, tensor products of the form $e_{1j_{1}}\otimes \cdots \otimes e_{sj_{s}}$ are primitive idempotents of $T$ because the set $\{ e_{1j_{1}}\otimes \cdots \otimes e_{sj_{s}} : \,  j_{i}=1,...,n_{i}  \text{ \, for \,}  i=1,...,s \}$ is a set of orthogonal idempotents with a suitable size, so it must be the set of the primitive idempotents of $T$. In addition, by the definition of $\beta$,  we have that $[e_{1j_{1}}\otimes \cdots \otimes e_{sj_{s}}]_{\beta}= [e_{1j_{1}}]_{C_{n_{1}}}\otimes \cdots \otimes [e_{sj_{s}}]_{C_{n_{s}}}$ for $j_{i}=1,...,n_{i}$ and $i=1,...,s$. Finally, since $\chi:\mathbf{R}\rightarrow T$ given by $\chi(x_{1}^{\epsilon_{1}}\cdots x_{s}^{\epsilon_{s}})=x_{1}^{\epsilon_{1}}\otimes\cdots \otimes x_{s}^{\epsilon_{s}}$ is an isomorphism of $\mathbf{F}$-algebras, the coordinate vectors of the primitive idempotents of $\mathbf{R}$ with respect to $G$ are the same coordinate vectors of the primitive idempotents of $\chi(\mathbf{R})=T$ with respect to $\chi(G)=\beta$.

\end{proof}

Thanks to Theorem \ref{comp-dime}, the indicator of $R$ can be computed as the inverse of the $\mathbf{F}$-linear transformation of $\mathbf{R}$ whose matrix with respect to $G$ has as its columns the coordinate vectors of  the primitive idempotents of $\mathbf{R}$ with respect to $G$. Since $A=_{G^*}[\lambda]_{\mu}$, this  could also have  been achieved using Character theory (see \cite[Corollary $II$.2]{split}), but we were mainly motivated by the fact that the indicator can be obtained as an application of Theorem \ref{dim-primary} (part $3$), with an approach that is independent of the classic one.

\begin{ex}\label{ex-abe}

Let $F=\gf{3}$ and $G=C_2 \times C_4$ where $C_2=\{1,x_{1}\}$ and $C_4=\{1,x_{2},x_{2}^2, x_{2}^3\}$ are the cyclic groups of order $2$ and $4$. Let $\alpha$ be a $4$-th primitive root of the unity whose minimal polynomial over $F$ is $z^2-z-1$. As mentioned before, $\mathbf{F}=F(\alpha)$ is a splitting field for $G$.
 Let $l_{i}$ be as in Theorem \ref{comp-dime}, and $\sigma(l_{i})$ denotes the spectrum of $l_{i}$  for $i=1,2$. Then $\sigma(l_{1})=\{1,2\}$ and $\sigma(l_{2})=\{1,2, \alpha^{2},\alpha^{6} \}$. Thus, by Theorem \ref{comp-dime}, the coordinate vectors of the primitive idempotents of $R_1=\mathbf{F}C_2$ and $R_2=\mathbf{F}C_4$ are $\{ (2,2), (2,1) \}$ and $\{ (1,1,1,1), (1,2,1,2), (1,\alpha^{6}, 2, \alpha^{2}), (1,\alpha^{2}, 2, \alpha^{6})\}$, respectively.
 
Let $\beta$ be as in Theorem \ref{comp-dime}, i.e., $\beta=\{1\otimes 1, 1\otimes x_2, 1\otimes x_{2}^{2}, 1\otimes x_{2}^{3}, x_{1}\otimes 1, x_{1}\otimes x_2, x_{1}\otimes x_{2}^{2}, x_{1}\otimes x_{2}^{3}\}$, then

 \[ \begin{array}{lcl}
(2,2)\otimes(1,1,1,1) & = & (2,2,2,2,2,2,2,2) \\
(2,2)\otimes (1,2,1,2)& = & (2,1,2,1,2,1,2,1) \\
(2,2)\otimes (1,\alpha^{6}, 2, \alpha^{2}) & = & (2,\alpha^{2},1,\alpha^{6},2,\alpha^{2},1,\alpha^{6})\\
(2,2)\otimes (1,\alpha^{2}, 2, \alpha^{6}) & = & (2,\alpha^{6},1,\alpha^{2},2,\alpha^{6},1,\alpha^{2})\\

(2,1)\otimes(1,1,1,1) & = & (2,2,2,2,1,1,1,1) \\
(2,1)\otimes (1,2,1,2)& = & (2,1,2,1,1,2,1,2) \\
(2,1)\otimes (1,\alpha^{6}, 2, \alpha^{2}) & = & (2,\alpha^{2},1,\alpha^{6},1,\alpha^{6},2,\alpha^{2})\\
(2,1)\otimes (1,\alpha^{2}, 2, \alpha^{6}) & = & (2,\alpha^{6},1,\alpha^{2},1,\alpha^{2},2,\alpha^{6})\\
\end{array}\]

 are the coordinate vectors of the primitive idempotents of $R_1 \otimes R_2$ with respect to $\beta$ (because $2\alpha^{6}=\alpha^{2}$). Suppose that $G$ has the ordering determined by $\beta$, i.e., $G=\{1,  x_2,  x_{2}^{2},  x_{2}^{3}, x_{1}, x_{1} x_2, x_{1} x_{2}^{2}, x_{1} x_{2}^{3}\}$, then $_{\beta}[\chi]_{G}=Id$. Hence, these are also the coordinate vectors of the primitive idempotents of $\mathbf{R}$ with respect to $G$, and

\begin{eqnarray*}
 A^{-1}& =& \left(\begin{array}{rrrrrrrr}
2 & 2 & 2 & 2 & 2 & 2 & 2 & 2 \\
2 & 1 & \alpha^{2} & \alpha^{6} & 2 & 1 & \alpha^{2} & \alpha^{6} \\
2 & 2 & 1 & 1 & 2 & 2 & 1 & 1 \\
2 & 1 & \alpha^{6} & \alpha^{2} & 2 & 1 & \alpha^{6} & \alpha^{2} \\
2 & 2 & 2 & 2 & 1 & 1 & 1 & 1 \\
2 & 1 & \alpha^{2} & \alpha^{6} & 1 & 2 & \alpha^{6} & \alpha^{2} \\
2 & 2 & 1 & 1 & 1 & 1 & 2 & 2 \\
2 & 1 & \alpha^{6} & \alpha^{2} & 1 & 2 & \alpha^{2} & \alpha^{6}
\end{array}\right)^{-1}\\
\end{eqnarray*}

\begin{eqnarray*}
  &=&\left(\begin{array}{rrrrrrrr}
1 & 1 & 1 & 1 & 1 & 1 & 1 & 1 \\
1 & 2 & 1 & 2 & 1 & 2 & 1 & 2 \\
1 & \alpha^{2} & 2 & \alpha^{6} & 1 & \alpha^{2} & 2 & \alpha^{6} \\
1 & \alpha^{6} & 2 & \alpha^{2} & 1 & \alpha^{6} & 2 & \alpha^{2} \\
1 & 1 & 1 & 1 & 2 & 2 & 2 & 2 \\
1 & 2 & 1 & 2 & 2 & 1 & 2 & 1 \\
1 & \alpha^{2} & 2 & \alpha^{6} & 2 & \alpha^{6} & 1 & \alpha^{2} \\
1 & \alpha^{6} & 2 & \alpha^{2} & 2 & \alpha^{2} & 1 & \alpha^{6} \\
\end{array}\right).
\end{eqnarray*}

 Let $v_{1}=  20002111$, $v_{2}=  01111101$, $v_{3}=   11212101$, $v_{4}=  21111111$, and $v_{5}=  00200202$. A straightforward computation shows that the element $e_{i}\in R$ such that $[e_{i}]_{G}=v_{i}$ is idempotent for all $i$. We computed $\mathrm{dim}_{F}(Re_{i})$ using Lemma \ref{dim-rank} and this coincided with $wt_{G}(D(e_{i}))$ for all $i$. For instance, $e_{1}$  generates an  $[8,4,4]$-abelian code and $[D(e_{1})]_{G}= 10000111$.  $e_{2}$ generates an  $[8,3,4]$-abelian code and $[D(e_{2})]_{G}= 01000011$. $e_{3}$ generates an  $[8,5,2]$-abelian code and $[D(e_{3})]_{G}= 01111100$. $e_{4}$ generates an  $[8,7,2]$-abelian code and $[D(e_{4})]_{G}=  01111111$. Finally, $e_{5}$  generates an  $[8,6,2]$-abelian code and  $[D(e_{3})]_{G}=01111011$.


\end{ex}

\section{MDS group codes}\label{mds-gpscod}

The Singleton Bound states that if a $[n, k, d]$ linear code over $\gf{q}$  exists, then $k \leq n-d+1$. A code for which the equality is attained in the Singleton Bound is called maximum distance separable, abbreviated MDS. These codes are optimal in the sense that they achieve the maximum possible minimum distance for a given length and dimension, and therefore are of great interest for error correction.  $C$ is said to be a trivial MDS code over $\gf{q}$ if $C = \gf{q}^n$ or $C$ is monomially equivalent to the repetition code  or its dual (see  \cite[pp 71-72]{cod2}). $C$ is an MDS group code if $C$ is an ideal of $\gf{q}G$ such that its parameters satisfy the equality in the Singleton Bound.

In this section, $F=\gf{q}$ and $p=char(F)$ unless stated otherwise. Observe that Corollary \ref{char=p} (part $1$) gives a way to easily compute the dimension of certain group codes, this leads us to our next definition. Let $J$ be an ideal of $R$, if $J$ principal generated by an idempotent, and $\mathrm{dim}_{\gf{q}}(J)\leq p$, then it will be said that $J$ is an \textbf{\textit{easily computable dimension group code}}, abbreviated ECD. If any non-trivial ideal of $R$ is an ECD group code, then it will be said that $R$ is an  \textbf{\textit{easily computable dimension group algebra}} abbreviated ECD. A consequence of Maschke's Theorem (see \cite[Theorem 3.4.7]{grouprings}) is that  $R$ is ECD iff $|G|\leq p+1$ and $|G|\neq p$. The following result is a direct consequence of the Singleton Bound, Theorem \ref{dim-primary} and Corollary \ref{char=p} (part $1$).

\begin{cor}\label{mindis-bounds}
Let $b\in R$ be such that $m_{b}(x)=x^{n}p_{1}(x)^{r_{1}}\cdots p_{t}(x)^{r_{t}} $ where the $p_i$ are monic irreducible distinct factors with $p_{i}\neq x$ for all $i$. Let  $p_{b}(x)=x^uh(x)$ where $x\nmid h(x)$. Let $d$ be the minimum distance of $Rb$. Then the following statements hold:

\begin{enumerate}

\item If $\zeta_{n}=\mathrm{dim}(ker(r_{b}^{n})/ker(r_{b}))$, then $d\leq u-\zeta_{n}+1$. Furthermore, $Rb$ is  an MDS group code iff is an $[|G|, |G|-u+\zeta_{n}, u-\zeta_{n}+1]$-code. In particular, if $n=1$, $d\leq u+1$; $Rb$ is  an MDS group code iff is an $[|G|, |G|-u, u+1]$-code.

\item \[d \leq \begin{cases} |G|-(t+1)|G|_{p} +1 &\mbox{if } |G|_{p} \mid \mathrm{dim}(ker(r_{b})) \mbox{                                                and } n>1 \\
|G|- t|G|_{p} +1 & \mbox{otherwise}  \end{cases}\]

In addition, if $n=1$ and $Rb$ is an MDS group code, then $d \equiv 1 \;(\bmod\; p)$ and $|G|_{p} +1 \leq d \leq |G|- t|G|_{p} +1$.

\item If $m_{b}(x)=x(x-a)^{s}$ for some integer $s\geq 1$ and $a\in F-\{0\}$, $r$ is the minimum positive integer in the class  $|G|\lambda_{1}(b)a^{-1}$; and $Rb$ is an ECD group code, then $d\leq |G|-r+1$. Besides, $Rb$ is  an MDS and ECD  group code iff is an $[|G|, r, |G|-r+1]$-code. 

\end{enumerate}

\end{cor}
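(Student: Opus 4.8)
The plan is to treat each of the three parts as a direct consequence of combining the dimension results from Theorem \ref{dim-primary} (or Corollary \ref{char=p}) with the Singleton Bound $d \leq n - k + 1$, where $n = |G|$ and $k = \mathrm{dim}(Rb)$. The overall strategy is uniform: compute (or bound) $k$, substitute into the Singleton Bound to get the stated upper bound on $d$, and then interpret the MDS condition $d = n - k + 1$ as the assertion that $Rb$ has exactly the parameters listed.

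For part $1$, I would invoke Theorem \ref{dim-primary} (part $1$), which gives $\mathrm{dim}(Rb) = \zeta_n + (|G| - u)$. Plugging $k = |G| - u + \zeta_n$ into the Singleton Bound $d \leq |G| - k + 1$ yields $d \leq u - \zeta_n + 1$ immediately. The MDS characterization is then just the statement that equality in Singleton means $Rb$ is an $[|G|,\, |G|-u+\zeta_n,\, u-\zeta_n+1]$-code. The special case $n=1$ follows because $r_b$ then has $0$ as a simple root, so $\zeta_n = 0$ (as $\ker(r_b) = \ker(r_b^n)$), and the formula collapses to $d \leq u+1$ with the corresponding MDS parameters.

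For part $2$, the upper bounds on $d$ come from substituting the lower bounds on $\mathrm{dim}(Rb)$ from Theorem \ref{dim-primary} (part $2$) into the Singleton Bound; since $k \geq (t+1)|G|_p$ or $k \geq t|G|_p$ according to the two cases, we get $d \leq |G| - k + 1 \leq |G| - (t+1)|G|_p + 1$ (resp.\ $|G| - t|G|_p + 1$). For the additional claim when $n=1$, I would use the fact from Theorem \ref{dim-primary} (part $2$) that $Rb$ is projective, so $|G|_p \mid k$; writing $d = |G| - k + 1$ for an MDS code and reducing modulo $p$ gives $d \equiv |G| + 1 \equiv 1 \pmod p$ (since $|G|_p \mid |G|$ forces $|G| \equiv 0$), and the range $|G|_p + 1 \leq d \leq |G| - t|G|_p + 1$ follows from the two-sided bound $t|G|_p \leq k \leq |G| - |G|_p$ stated there.

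For part $3$, since $Rb$ is an ECD group code, Corollary \ref{char=p} (part $1$) gives $\mathrm{dim}(Rb) = r$, the minimum positive integer in the class $|G|\lambda_1(b)a^{-1}$. The Singleton Bound then reads $d \leq |G| - r + 1$, and the MDS-and-ECD characterization is the tautological restatement that equality holds iff $Rb$ is an $[|G|, r, |G|-r+1]$-code. I do not anticipate a genuine obstacle in any part; the only point requiring care is the modular reduction in part $2$, where one must verify that the projectivity forces $|G| \equiv 0 \pmod p$ (because $n=1$ means $0$ is a simple root, so $Rb$ is projective and $p \mid |G|$ via $|G|_p > 1$) before concluding $d \equiv 1 \pmod p$; otherwise the congruence claim would fail when $p \nmid |G|$.
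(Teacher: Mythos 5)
Your overall route coincides with the paper's: the corollary is presented there, without a separate proof, as a direct consequence of the Singleton Bound together with Theorem \ref{dim-primary} and Corollary \ref{char=p} (part $1$), and your handling of parts $1$ and $3$, as well as the two case bounds at the start of part $2$, are correct instances of exactly that substitution (including the observations that $\zeta_{1}=0$ when $n=1$, and that $r\leq p$ by definition of $r$, which is what makes the ECD characterization in part $3$ close).

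The genuine gap is in your justification of $d\equiv 1 \pmod p$ in part $2$. You assert that projectivity of $Rb$ forces $|G|\equiv 0 \pmod p$ ``via $|G|_{p}>1$''; this is false, and circular besides: $|G|_{p}>1$ is \emph{equivalent} to $p\mid |G|$, and projectivity gives no such thing. When $p\nmid |G|$ the algebra $R$ is semisimple (Maschke), every ideal is projective with an idempotent generator, so the case $n=1$ certainly occurs with $|G|_{p}=1$. Concretely, in $\gf{3}C_{2}$ the repetition code generated by the primitive idempotent $2+2g$ is an MDS $[2,1,2]$-code with $m_{b}(x)=x(x-1)$, hence $n=1$, and $d=2\not\equiv 1 \pmod 3$. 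What the hypotheses actually yield is a congruence modulo $|G|_{p}$: by the $n=1$ clause of Theorem \ref{dim-primary} (part $2$), $|G|_{p}\mid k$ where $k=\mathrm{dim}(Rb)$, and trivially $|G|_{p}\mid |G|$, so for an MDS code $d-1=|G|-k$ is divisible by $|G|_{p}$, i.e., $d\equiv 1 \pmod{|G|_{p}}$. This upgrades to $d\equiv 1\pmod p$ exactly when $p\mid |G|$, i.e., in the non-semisimple case, which is the implicit setting of the statement (compare the example following it in the paper, where $|G|=mp^{l}$ with $l\geq 1$). So the step you yourself flagged as delicate cannot be repaired the way you propose: the mod-$p$ congruence needs $p\mid |G|$ as an additional hypothesis (or should be weakened to a congruence mod $|G|_{p}$), and it is not derivable from projectivity. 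Your derivation of the two-sided bound $|G|_{p}+1\leq d\leq |G|-t|G|_{p}+1$ from $t|G|_{p}\leq k\leq |G|-|G|_{p}$ is, by contrast, correct as stated.
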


\begin{ex} 
Let $G$ be a group of order $mp^{l}$ with $l\geq 1$, $m\neq 1$, and $p\nmid m$. Let $b\in R$, $m_{b}(x)=x^{n}p_{1}(x)^{r_{1}}\cdots p_{t}(x)^{r_{t}} $ where the $p_i$ are monic irreducible distinct factors with $p_{i}\neq x$ for all $i$. Then, by \cite[ Chap. VII, Corollary 7.16]{fin-gp}, $p^{l}$ divides $ \mathrm{dim}(ker(r_{b}^{n}))$ and $\mathrm{dim}(ker(p_{i}(r_{b})^{r_{i}}))$ for  $i=1,...,t$, therefore $1\leq t<m$. So, if  $m=2$, then $m_{b}(x)$ has only two irreducible divisors. Thus, if $n=1$ and $Rb$ is an MDS group code, then $Rb$ is projective (by Lemma \ref{projective}) and its minimum distance $d$ must be $p^{l}+1$ (by Corollary \ref{mindis-bounds}, part $2$).\\
Let $G=\langle a,b \mid a^3=b^2=1, \, bab^{-1}=a^2 \rangle=\{1,b,a,a^2,ba^2,ba\}$ be the symmetric group of degree $3$, and $R=\gf{9}G$. If $\alpha$ is an element of $\gf{9}$ with minimal polynomial $z^2+z+1$, then $b=(2\alpha + 2) + (\alpha + 1)b + \alpha a + (2\alpha + 1)a^2+ (\alpha + 1)ba^2+  ba$ is such that $m_{b}(x)=x(x + 2\alpha)^2$. In this case, $Rb$ is an MDS $[6,3,4]$-code, and so $d=3+1$ as stated in Corollary \ref{mindis-bounds}(part $2$). On the other hand, $b'=(\alpha + 1) + \alpha b + 2 a + 2a^2+  2ba$ is such that $m_{b'}(x)= x^2 (x + \alpha + 2)^2$. In this case, $Rb'$ is an MDS $[6,4,3]$-code, and so $d=3\lneq 3+1$, this happen because the multiplicity of $0$ as a root of $m_{b'}(x)$ is not $1$ but $2$. 
\end{ex}



Now we will study the relation of MDS and ECD group codes. For that purpose we recall the MDS-Conjecture.

\textbf{ MDS-Conjecture\cite[pg 265]{cod2}:} If there is a non-trivial $[n, k]$ MDS code over $\gf{q}$, then     
              
\[n \leq \begin{cases} q + 2 &\mbox{if } q \text{\ even,                                               and } k=3 \mbox{ or } k=q-1\\
q+1 & \mbox{otherwise}  \end{cases}\]

\begin{lem}\label{not-msd}
Let $p$ be a prime number. If the MDS-Conjecture is true, then
the only non-trivial MDS group codes in the non-semisimple group algebra $\gf{p}G$ exist when $G=C_{p}$ and $p$ is odd; and are equivalent to extended Reed-Solomon codes.
\end{lem}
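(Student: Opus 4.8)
The plan is to show that in the non-semisimple case $p \mid |G|$, the only way a non-trivial MDS group code can exist is the extreme situation $G = C_p$ with $p$ odd, and to identify the resulting codes. First I would observe that $\gf{p}G$ being non-semisimple forces $p \mid |G|$ by Maschke's theorem (see \cite[Theorem 3.4.7]{grouprings}), so that $|G|_p \geq p$. Writing $n = |G|$, a non-trivial MDS $[n,k,d]$ code has $2 \leq k \leq n-1$, and the MDS-Conjecture bounds the length by $n \leq q+1 = p+1$ in the generic case (or $n \leq p+2$ in the two exceptional cases). Combining $p \leq |G|_p \leq |G| = n \leq p+1$ (respectively $p+2$) already pins $|G|$ down to a very short list of possibilities, and this is the crux of the argument.

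Next I would rule out the exceptional cases and the even case. Since $p \mid |G|$ and $|G| \leq p+1$, the only option consistent with $|G|_p \geq p$ is $|G| = p$, which forces $G = C_p$ (the unique group of prime order). If $p = 2$, then $G = C_2$ and $\gf{2}C_2$ has dimension $2$; its only non-trivial ideals have dimension $1$, and one checks directly that these give trivial MDS codes (the repetition code and its dual), so no non-trivial MDS group code arises. To handle the even-exceptional MDS-Conjecture bound $n \leq p+2$ with $k \in \{3, p-1\}$, I note that $p = q$ is prime, so $q$ even means $p = 2$, already excluded; hence the bound $n \leq p+1$ is the operative one and the exceptional slack is vacuous here.

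It remains to treat $G = C_p$ with $p$ odd and identify the codes. Here $\gf{p}C_p \cong \gf{p}[x]/(x^p - 1) = \gf{p}[x]/(x-1)^p$ is a local ring, and its ideals are exactly the powers $(x-1)^i$ for $0 \leq i \leq p$, which as cyclic codes of length $p$ over $\gf{p}$ are the well-known repeated-root codes. A non-trivial ideal corresponds to $1 \leq i \leq p-1$, giving dimension $k = p - i$ and, by the standard computation for these cyclic codes, minimum distance $d = i+1$, so that $k + d = p + 1 = n + 1$ and every such code meets the Singleton bound. Thus every non-trivial ideal of $\gf{p}C_p$ is MDS. Finally I would recall that cyclic codes of length $p$ over $\gf{p}$ of this form are precisely (equivalent to) the extended Reed-Solomon codes, or equivalently that the $[p,k,p-k+1]$ MDS cyclic codes here are the punctured/extended RS codes; this identification is the content of the last clause of the statement.

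The main obstacle I anticipate is not the counting argument, which the MDS-Conjecture makes almost immediate, but the clean identification of the codes $(x-1)^i \subseteq \gf{p}[x]/(x-1)^p$ with extended Reed-Solomon codes and the careful verification that for $G = C_2$ one genuinely obtains only trivial MDS codes; the algebraic structure of $\gf{p}C_p$ as $\gf{p}[x]/(x-1)^p$ and the explicit minimum-distance formula $d = i+1$ for these repeated-root cyclic codes are the details that require care rather than the high-level logic.
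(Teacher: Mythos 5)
There is a genuine gap in your treatment of $p=2$, and it is circular. You first invoke the generic bound $|G|\leq p+1$ of the MDS-Conjecture to force $|G|=p$, hence $G=C_2$ when $p=2$, and you check $\gf{2}C_2$ directly; you then dismiss the exceptional even case of the conjecture ($n\leq q+2$ when $q$ is even and $k=3$ or $k=q-1$) on the grounds that ``$p=2$ is already excluded.'' But it was excluded only \emph{via} the generic bound, which is exactly what the exceptional case suspends: for $q=p=2$ the conjecture permits $n=4$ with $k=3$, so you must also rule out non-trivial MDS group codes of dimension $3$ in $\gf{2}C_4$ and $\gf{2}(C_2\times C_2)$ (note that the augmentation ideal of either algebra \emph{is} a $[4,3,2]$ MDS code, so something really must be said --- namely that any $[n,n-1,2]$ binary code has all-ones parity check and is therefore a trivial MDS code). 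The paper sidesteps this entirely by citing the theorem that \emph{every} binary MDS code is trivial (\cite[Theorem 2.4.4]{cod1}), which disposes of $p=2$ uniformly, independent of length and of the exceptional slack in the conjecture; your proof needs either that citation or the explicit parity-check argument for $k=n-1$ to close the loop.

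A secondary weakness: in the odd case your counting argument ($p\mid |G|$ by Maschke, $|G|\leq p+1$, hence $|G|=p$ and $G=C_p$) coincides with the paper's, but the final clause --- that the resulting codes are extended Reed--Solomon --- is where your proposal asserts rather than proves. Your claim that the ideal $(x-1)^i\subseteq \gf{p}[x]/(x-1)^p$ has minimum distance $i+1$ ``by the standard computation'' is not a routine repeated-root weight computation; together with the RS identification it is precisely the content of \cite[Theorem 1]{cyclic-mds} (Roth--Seroussi), which is what the paper cites at this point. You correctly flag this identification as the delicate step, but as written it remains unestablished; citing Roth--Seroussi (or reproducing its argument) is needed, and doing so would also give you the converse direction --- that these powers of $(x-1)$ are the \emph{only} MDS ideals --- for free.
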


\begin{proof}
If $\gf{p}G$ is non-semisimple, by \cite[Theorem 3.4.7]{grouprings}, $p\mid \, |G|$. Suppose that the MDS-Conjecture is true and that  there exists an MDS $[|G|, k]$ group code $C$ in $\gf{p}G$.  Then if $p=2$, every MDS group code in $\gf{2}G$ is trivial (by \cite[Theorem 2.4.4]{cod1}). If $p$ is an odd prime, then $p\mid \, |G|$ and $|G| \leq p+1$. Since the equality $|G|= p+1$ is not possible, $p\mid \, |G|$ and $|G| < p+1$, thus  $G=C_{p}$. Now, the assertion follows from \cite[Theorem 1]{cyclic-mds}.
\end{proof}

\begin{teo}\label{msd-ecd}
The following statements hold:

\begin{enumerate}
\item If $C$ is an MDS and ECD group code in $R$, then $|G|\leq q+1$

\item Let $p$ be an odd prime. Suppose that the MDS-Conjecture is true and $G\neq C_{p}$. If there exists a non-trivial MDS group code in $\gf{p}G$, then $\gf{p}G$ is an ECD group algebra.
\end{enumerate}
\end{teo}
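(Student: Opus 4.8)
The plan is to treat the two parts separately; in both, the ECD condition enters only to cap dimensions at $p$, and the substantive input is a length bound for MDS codes, supplied by Lemma \ref{not-msd} and the MDS-Conjecture. Throughout I use the characterization quoted above that $\gf{p}G$ (resp. $R=\gf{q}G$) is an ECD group algebra exactly when $|G|\le p+1$ and $|G|\ne p$.

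For part $1$, set $k=\dim_{\gf{q}}(C)$. Since $C$ is ECD it is generated by an idempotent and $k\le p\le q$. We may assume $C$ is non-trivial, so that $2\le k\le |G|-2$; this is needed, since the repetition ideal (dimension $1$) is ECD in the semisimple case with $|G|$ arbitrary, while the full algebra and the dual of the repetition code already satisfy $|G|\le p+1$ under the ECD hypothesis. For a non-trivial MDS code the columns of a generator matrix form an $n$-arc in $PG(k-1,q)$, and the elementary arc bound gives $|G|\le q+k-1\le q+p-1$. To reach the sharp bound $|G|\le q+1$ I would invoke the known cases of the MDS-Conjecture for $k\le p$ (classical for prime fields, and covered by the resolved range $k\le 2p-2$): because $k\le p$ forces $q$ odd whenever $k=3$ and excludes $k=q-1$ with $q$ even, no exceptional (hyperoval) configuration occurs, so $|G|=n\le q+1$. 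I expect this upgrade from the elementary $q+k-1$ to the sharp $q+1$ to be the main obstacle, since it is precisely a (known, for $k\le p$) instance of the MDS-Conjecture.

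For part $2$ I would argue by cases on the semisimplicity of $\gf{p}G$. If $\gf{p}G$ were non-semisimple, i.e. $p\mid |G|$, then Lemma \ref{not-msd} (which uses the MDS-Conjecture) would force $G=C_{p}$ for a non-trivial MDS group code to exist; since $G\ne C_{p}$, no such code exists, contradicting the hypothesis. Hence $\gf{p}G$ is semisimple, $p\nmid |G|$, and in particular $|G|\ne p$. The given non-trivial MDS code has parameters $[\,|G|,k\,]$ with $2\le k\le |G|-2$, so applying the MDS-Conjecture over $\gf{p}$ and using that $p$ is odd (which removes the even-$q$ exceptional branch) yields $|G|\le p+1$. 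Combining $|G|\le p+1$ with $|G|\ne p$ and the quoted characterization shows that $\gf{p}G$ is an ECD group algebra, as required.

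The only delicate points in part $2$ are invoking Lemma \ref{not-msd} to eliminate the non-semisimple case (so that the mere existence of a non-trivial MDS code forces semisimplicity) and checking that the odd-prime hypothesis discards the exceptional branch of the MDS-Conjecture; both are short once set up. The genuine difficulty, as indicated above, is concentrated in the sharp MDS length bound underlying part $1$.
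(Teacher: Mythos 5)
Your proposal is correct and takes essentially the same route as the paper: part 1 is the paper's one-line appeal to Ball's proof of the MDS conjecture in dimension $k\le p$ (\cite[Corollary 9.1]{mds}), which applies precisely because the ECD hypothesis caps $\dim(C)\le p$, and part 2 reproduces the paper's argument exactly --- Lemma \ref{not-msd} eliminates the non-semisimple case, the MDS-Conjecture with $q=p$ odd yields $|G|\le p+1$, and semisimplicity gives $|G|\ne p$, so the ECD characterization applies. Your explicit exclusion of trivial MDS codes in part 1 is a refinement the paper leaves implicit, and it is genuinely needed: as you observe, the repetition ideal is an MDS, ECD group code of unbounded length, so the bound $|G|\le q+1$ only holds once $C$ is assumed non-trivial.
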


\begin{proof}

\begin{enumerate}
\item It follows from \cite[Corollary 9.1]{mds}.

\item If there exists a non-trivial MDS group code in $\gf{p}G$, by Lemma \ref{not-msd}, $\gf{p}G$ is semisimple or $G=C_{p}$ with $p$ odd. So  $\gf{p}G$ is semisimple, and by the MDS-Conjecture, $|G|\leq p+1$, implying that  $R$ is an ECD group algebra.
\end{enumerate}
\end{proof}

\section*{Conclusion}
In this paper, we study the dimension of a principal ideal in a group algebra through the minimal polynomial of a generator. This approach allows several new results concerning the dimension of ideals in group algebras which complement some results appearing in the literature. It also provides a way to compute dimensions indicators for abelian codes and presents relations between MDS group codes and easily computable dimension group codes via the MDS-conjecture.
 
\section*{Acknowledgements}

The authors want to thank J. A. Sosaya-Chan for his valuable comments and suggestions. The first author was partially supported by CONACYT (Consejo Nacional de Ciencia y Tecnología, México) under Grant no. 401846.

\end{document}